\documentclass[sigconf]{acmart}

%
\def\BibTeX{{\rm B\kern-.05em{\sc i\kern-.025em b}\kern-.08emT\kern-.1667em\lower.7ex\hbox{E}\kern-.125emX}}

\usepackage{booktabs} 
\usepackage[ruled,vlined,linesnumbered]{algorithm2e}
\usepackage{hyperref}

\usepackage{balance}



\newcommand{\dis}{{\sc Dispersion}}

\newcommand{{\rh}}{{\widehat r}}
\newcommand{{\Rh}}{{\widehat R}}

\newcommand{\cR}{{\mathcal R}}

\usepackage{mathtools}


\newboolean{short}
\setboolean{short}{false}
\newcommand{\shortOnly}[1]{\ifthenelse{\boolean{short}}{#1}{}}
\newcommand{\onlyShort}[1]{\ifthenelse{\boolean{short}}{#1}{}}
\newcommand{\longOnly}[1]{\ifthenelse{\boolean{short}}{}{#1}}
\newcommand{\onlyLong}[1]{\ifthenelse{\boolean{short}}{}{#1}}
\newcommand{\shortLong}[2]{\ifthenelse{\boolean{short}}{#2}{#1}}
\newcommand{\longShort}[2]{\ifthenelse{\boolean{short}}{#2}{#1}} 
\onlyShort{

\usepackage{enumitem}
\setitemize{noitemsep,topsep=0pt,parsep=0pt,partopsep=0pt}
\setenumerate{noitemsep,topsep=0pt,parsep=0pt,partopsep=0pt}
}
    
%

%

%

%
\settopmatter{printacmref=false} 
\renewcommand\footnotetextcopyrightpermission[1]{}
\pagestyle{plain}
%
\begin{document}

%

\title{Dispersion of Mobile Robots in the Global Communication Model}

%
\author{Ajay D. Kshemkalyani}
\affiliation{%
  \institution{University of Illinois at Chicago}
  \state{Illinois 60607}
  \country{USA}}
  \email{ajay@uic.edu}

\author{Anisur Rahaman Molla}
\affiliation{%
  \institution{Indian Statistical Institute}
  \city{Kolkata 700108}
  \country{India}}
\email{molla@isical.ac.in}

\author{Gokarna Sharma}
\affiliation{%
  \institution{Kent State University}
  \city{Ohio 44240}
  \country{USA}}
\email{sharma@cs.kent.edu}

%
\renewcommand{\shortauthors}{A. D. Kshemkalyani, A. R. Molla, and G. Sharma}

\begin{abstract}
The dispersion problem on graphs asks $k\leq n$ robots placed initially arbitrarily on the nodes of an $n$-node anonymous graph to reposition autonomously to reach a configuration in which each robot is on a distinct node of the graph. This problem is of significant interest due to its relationship to other fundamental robot coordination problems, such as exploration, scattering, load balancing, and relocation of self-driven electric cars (robots) to recharge stations (nodes). 
In this paper, we consider dispersion in 
the {\em global communication} model where a robot can communicate with any other robot in the graph (but the graph is unknown to robots). 
We provide three novel deterministic algorithms, two for arbitrary graphs and one for arbitrary trees, in a synchronous setting where all robots perform their actions
in every time step. 
For arbitrary graphs, our first algorithm is based on a DFS traversal and guarantees $O(\min(m,k\Delta))$ steps runtime using $\Theta(\log (\max(k,\Delta)))$ bits at each robot, where $m$ is the number of edges and $\Delta$ is the maximum degree of the graph.
The second algorithm for arbitrary graphs is based on a BFS traversal and guarantees $O( \max(D,k) \Delta (D+\Delta))$ steps runtime using $O(\max(D,\Delta \log k))$ bits at each robot, where $D$ is the diameter of the graph. The algorithm for arbitrary trees is also based on a BFS travesal and guarantees $O(D\max(D,k))$ steps runtime using $O(\max(D,\Delta \log k))$ bits at each robot. 
Our results are significant improvements compared to the existing results established in the {\em local communication} model where a robot can communication only with other robots present at the same node.  
Particularly, the DFS-based  algorithm is optimal for both memory and time in constant-degree arbitrary graphs. 
The BFS-based algorithm for arbitrary trees is optimal with respect to runtime when $k\leq O(D)$.
\end{abstract}

%
\keywords{Multi-agent systems, Mobile robots, Dispersion, Collective exploration, Scattering, Uniform deployment, Load balancing, Distributed algorithms, Time and memory complexity}

\maketitle
\sloppy
\section{Introduction}
The dispersion of autonomous mobile robots to spread them out evenly
in a region is a problem of significant interest in distributed robotics, e.g., see \cite{Hsiang:2003,HsiangABFM02}. Recently, this problem has been formulated by Augustine and Moses Jr.~\cite{Augustine:2018} in the context of graphs. 
They defined  the problem as follows:  Given any arbitrary initial configuration of $k\leq n$ robots positioned on the nodes of an $n$-node anonymous graph, the robots reposition autonomously to reach a configuration where each robot is positioned on a distinct node of the graph (which we call the {\dis} problem).  
This problem has many practical applications, for example, in relocating self-driven electric cars (robots) to recharge stations (nodes), assuming that the cars have smart devices to communicate with each other to find a free/empty charging station \cite{Augustine:2018,Kshemkalyani}.
This problem is also important due to its relationship to many other well-studied autonomous robot coordination problems,
such as  exploration, scattering, load balancing, covering, and self-deployment~\cite{Augustine:2018,Kshemkalyani}. One of the key aspects of mobile-robot research is to understand how to use the resource-limited robots to accomplish some large task in a distributed manner \cite{Flocchini2012,flocchini2019}. 

In this paper, we continue our study on the trade-off between memory requirement and time to solve {\dis} on graphs.
We consider for the very first time the problem of dispersion in the {\em global communication} model where a robot can communicate with any other robot in the system (but the graph structure is not known to robots). The previous work  \cite{Augustine:2018,Kshemkalyani,Kshemkalyani2019} (details in Tables \ref{table:comparision}, \ref{table:comparision-tree}, and related work) on {\dis} considered the {\em local communication} model where a robot can only communicate with other robots that are present at the same node.
Although  the global communication model seems stronger than the local model  in the first sight, many challenges that occur in the local model also arise in the global model. 
For example, two robots in two neighboring nodes of $G$ cannot figure out just by communication which edge of the nodes leads to each other.  Therefore, the robots still need to explore through the edges as in the local model.  
The global communication model
has been considered heavily in the past in distributed robotics, e.g., see \cite{Das18,FraigniaudGKP06,Ortolf:2012}, in addition to the local model, and our goal is to explore how much global communication helps for {\dis} in graphs compared to the local model.


In this paper, we  
provide three new deterministic algorithms for {\dis} in the global communication model, two for arbitrary graphs and one for arbitrary trees.
Our first algorithm for arbitrary graphs using a {\em depth first search} (DFS) traversal improves by a $O(\log n)$ factor on the state-of-the-art in the local communication model; 
see Table \ref{table:comparision}. 
The second algorithm for arbitrary graphs and the algorithm for arbitrary trees are the first algorithms designed for {\dis} using a {\em breadth first search} (BFS) traversal and provide different time-memory trade-offs.
We also complement our algorithms by some lower bounds on time and memory requirement in the global model. 


\vspace{1mm}
\noindent{\bf Overview of the Model and Results.} We consider the same model (with a only difference as described in the next paragraph) as in Augustine and Moses Jr.~\cite{Augustine:2018}, Kshemkalyani and Ali \cite{Kshemkalyani}, and Kshemkalyani {\it et al.} \cite{Kshemkalyani2019} where a system of $k\leq n$ robots are operating on an $n$-node graph $G$.  
$G$ is assumed to be a connected, undirected graph with $m$ edges, diameter $D$, and maximum degree $\Delta$. In addition, $G$ is {\em anonymous}, i.e.,
nodes have no unique IDs  and hence are indistinguishable but the ports (leading to incident
edges) at each node have unique labels from $[1,\delta]$, where $\delta$ is the degree of that node. 
The robots are {\em distinguishable}, i.e., they have unique IDs in the range $[1,k]$. 
The robot activation setting is {\em synchronous} -- all robots are activated in a round and they perform their operations simultaneously in synchronized rounds. 
Runtime is measured in rounds (or steps).

\begin{table}[H]
{\footnotesize
\centering
\begin{tabular}{cccccc}
\toprule
{\bf Algorithm} & {\bf Memory/robot}   & {\bf Time} & {\bf Comm.~Model/}   \\
 & {\bf (in bits)}   & {\bf (in rounds)} & {\bf Initial Conf.} \\
\toprule
Lower bound & $\Omega(\log(\max(k,\Delta)))$    & $\Omega(k)$ & local  \\
\hline
\cite{Augustine:2018}\footnotemark & $O(\log n)$     & $O(mn)$  & local/general \\
\hline
\cite{Kshemkalyani} & $O(k\log \Delta)$     & $O(m)$  & local/general\\
\hline
\cite{Kshemkalyani} & $O(D \log \Delta)$     & $O(\Delta^D)$  &local/general \\
\hline
\cite{Kshemkalyani} & $O(\log(\max(k,\Delta)))$     & $O(mk)$ & local/general\\
\hline
\cite{Kshemkalyani2019} & $O(\log n)$      & $O(\min(m,k\Delta)\cdot \log k)$  & local/general\\
\hline
\hline
{\bf Lower bound} & $\Omega(\log(\max(k,\Delta)))$    & $\Omega(k)$ & global  \\
\hline
{\bf Thm.~\ref{theorem:0}} & $O(\log(\max(k,\Delta)))$      & $O(\min(m,k\Delta))$  & global/general\\
\hline
{\bf Thm.~\ref{theorem:1}(a)} & $O(\max(D,\Delta \log k))$      & $O(D\Delta(D+\Delta))$  & local/rooted\\
\hline
{\bf Thm.~\ref{theorem:1}(b)} & $O(\max(D,\Delta \log k))$      & $O(\max(D,k)\Delta(D+\Delta))$  & global/general\\
\bottomrule
\end{tabular}
\caption{The results on {\dis} for $k\leq n$ robots on $n$-node arbitrary graphs with $m$ edges, $D$ diameter, and $\Delta$ maximum degree. $^1$The results in \cite{Augustine:2018} are only for $k=n$. 
}  
\label{table:comparision}
}
\vspace{-8mm}
\end{table}

\begin{table}[H]
{\footnotesize
\centering
\begin{tabular}{cccccc}
\toprule
{\bf Algorithm} & {\bf Memory/robot}   & {\bf Time} & {\bf Comm.~Model/}   \\
 & {\bf (in bits)}   & {\bf (in rounds)} & {\bf Initial Conf.} \\
\toprule
Lower bound & $\Omega(\log(\max(k,\Delta)))$    & $\Omega(D^2)$ & local  \\
\hline
\cite{Augustine:2018} & $O(\log n)$     & $O(n)$ & local/general \\
\hline
\hline
{\bf Lower bound} & $\Omega(\log(\max(k,\Delta)))$    & $\Omega(D^2)$ & global  \\
\hline
{\bf Thm.~\ref{theorem:2}(a)} & $O(\max(D,\Delta \log k))$      & $O(D^2)$ & local/rooted \\
\hline
{\bf Thm.~\ref{theorem:2}(b)} & $O(\max(D,\Delta \log k))$      & $O(D\max(D,k))$ & global/general \\
\bottomrule
\end{tabular}
\caption{The results on {\dis} for $k\leq n$ robots on $n$-node trees.  
\label{table:comparision-tree}
}}
\vspace{-6mm}
\end{table}

The only difference with the model in \cite{Augustine:2018,Kshemkalyani,Kshemkalyani2019} is they assume the local communication model -- the robots in the system can communicate with each other only when they are at the same node of $G$, whereas we consider in this paper the global communication model -- the robots in the system can communicate with each other irrespective of their positions. Despite this capability, robots are still oblivious to $G$ and they will not know the positions of the robots that they are communicating to.  

We establish the following two results for {\dis} in an arbitrary graph. 
The second result differentiates the initial configurations of $k\leq n$ robots on $G$. We call the configuration {\em rooted} if all $k\leq n$ robots are on a single node of $G$ in the initial configuration. We call the initial configuration {\em general}, otherwise.

\begin{theorem}
\label{theorem:0}
Given any initial configuration of $k\leq n$ mobile robots in an arbitrary, anonymous 
$n$-node graph $G$ having 
$m$ edges and maximum degree $\Delta$, {\dis} can be solved in $O(\min(m,k\Delta))$ time with $O(\log (\max(k,\Delta)))$ bits  at each robot in the global communication model. 
\end{theorem}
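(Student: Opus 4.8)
The plan is to implement a depth-first search (DFS) traversal of $G$ by the collection of robots, using the global communication to coordinate a single logical DFS token while the physical robots act as the markers that settle at visited nodes. The central idea is that DFS naturally solves {\dis}: as the traversal discovers each new node, one robot settles there permanently, so after all robots have settled, each occupied node holds exactly one robot. The runtime target $O(\min(m,k\Delta))$ is exactly the cost of a DFS that either traverses every edge ($O(m)$) or, more refined, charges $O(\Delta)$ work per settled robot ($O(k\Delta)$), whichever is smaller.

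First I would set up the DFS mechanics in the local-memory regime of $O(\log(\max(k,\Delta)))$ bits. Each robot must store only a constant number of fields: the port through which it entered its current node (to know the parent direction), a settled/unsettled flag, and the ID of the robot settled at the current node; all of these fit in $O(\log(\max(k,\Delta)))$ bits since ports range in $[1,\Delta]$ and IDs in $[1,k]$. At each node the active (unsettled) robots use port labels to decide the next unexplored edge, advancing the DFS by incrementing to the next port; the lowest-ID robot among those arriving at a fresh node settles there and records its ID so that later arrivals recognize the node as occupied. The key role of global communication is to let robots determine, without physical co-location, whether the node a DFS step is about to enter is already occupied (hence a back/cross edge to retreat from) or genuinely new; in the local model this required extra rounds or extra memory, and removing that overhead is precisely where the $O(\log n)$-factor improvement over \cite{Kshemkalyani2019} comes from. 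I would argue that a robot can query the global channel to test occupancy in $O(1)$ rounds per step, because settled robots broadcast their presence and the traversal logic compares against the recorded IDs.

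Next I would establish correctness: by induction on the DFS progress, the set of settled robots always occupies a connected subtree rooted at the DFS start, no two robots ever settle on the same node (guaranteed by the lowest-ID tie-break plus the occupancy test), and the traversal does not terminate until every robot has settled, so the final configuration has all robots on distinct nodes. Since $k \le n$ and $G$ is connected, the DFS reaches enough nodes to settle all $k$ robots. For the time bound I would charge each of the two standard DFS phases — forward exploration along tree edges and backtracking — a cost that sums to $O(\min(m,k\Delta))$: every edge is traversed a constant number of times giving $O(m)$, while an alternative accounting settles a new robot at most every $O(\Delta)$ steps (the time to scan all ports of a node) giving $O(k\Delta)$, and the algorithm never does worse than the minimum. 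For the memory bound I would simply tally the constant number of $O(\log(\max(k,\Delta)))$-bit fields listed above.

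The main obstacle I expect is the occupancy-detection step and its interaction with the memory budget. With only $O(\log(\max(k,\Delta)))$ bits per robot, a robot cannot store a map of which nodes are visited, so it must rely on the global channel to distinguish a genuinely new node from one already holding a settled robot — yet global communication does not reveal \emph{positions}, only messages, so the protocol must encode enough in the broadcast (e.g. the settled robot's ID together with the port used to reach it) for the advancing robots to correctly classify each candidate edge as tree, back, or forward/cross. Getting this classification right with $O(1)$ rounds of overhead per DFS step, without inflating either the time or the memory, is the crux; if the occupancy test costs more than $O(1)$ amortized rounds per edge, the $O(\min(m,k\Delta))$ bound breaks. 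I would therefore spend the bulk of the proof verifying that the broadcast-and-compare mechanism resolves every edge classification correctly and cheaply, and that the DFS stack is encoded implicitly in the settled robots' parent-pointers rather than stored explicitly in any single robot's memory.
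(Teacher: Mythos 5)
There is a genuine gap, and it is in the step you yourself identify as the crux. Your plan hinges on an occupancy test: an advancing robot queries the global channel to learn whether the node at the other end of a candidate port is already occupied, in $O(1)$ rounds, without traversing the edge. This is not implementable in the paper's model. Global communication lets a robot read the memory of every other robot, but the graph is anonymous and robots do not know their positions, so a settled robot's broadcast (its ID, its parent port, anything it stores) cannot be matched by the advancing robot to any particular neighbor of its current node --- the paper states explicitly that two robots at adjacent nodes cannot determine by communication alone which ports lead to each other, and that robots ``still need to explore through the edges as in the local model.'' Your proposed fix (broadcasting the settled robot's ID together with the port used to reach it) does not resolve this, because the advancing robot has no name for the node it would enter and hence nothing to compare against. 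Consequently the edge classification you rely on cannot be done in $O(1)$ rounds per step by broadcast-and-compare; it requires physically entering the node, which is exactly what the local-model DFS already does.

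The second, larger omission is that your proof addresses essentially only the rooted case. For robots initially colocated at one node, the plain DFS already achieves $\min(4m-2n+2,k\Delta)$ rounds in the local model with no global communication (this is the paper's Lemma on $DFS(k)$). The entire content of Theorem~\ref{theorem:0} is the \emph{general} initial configuration: up to $k$ DFS traversals grow in parallel from different start nodes, and the difficulty is what happens when they collide. The paper's algorithm resolves collisions by broadcasting $Subsume$ messages, having all robots build a common directed ``Subsume graph'' on component IDs, electing a winner per connected component, and then running a separate \textsc{Collect} DFS in which a leader gathers the stopped robots of subsumed components back to a home node before resuming growth. The time bound then follows from an amortized charging argument over the edges traversed across all (at most $k-1$) GROW/COLLECT transitions of the surviving component, showing the total is still $O(\min(m,k\Delta))$; this is where global communication genuinely earns the $O(\log k)$ improvement over the local-model synchronization of \cite{Kshemkalyani2019}. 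None of this merging machinery or its amortized analysis appears in your proposal, so the claimed bound is not established for general initial configurations.
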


\begin{theorem}
\label{theorem:1}
Given $k\leq n$ mobile robots in an arbitrary, anonymous 
$n$-node graph $G$ having 
$m$ edges, diameter $D$, and maximum degree $\Delta$:
\begin{itemize}
    \item [a.] For the rooted initial configurations,
{\dis} can be solved in $O(D\Delta(D+\Delta))$ time with $O(\max(D,\Delta \log k))$ bits  at each robot in the local communication model. 
\item [b.] For the general initial configurations,
{\dis} can be solved in $O(\max(D,k)\Delta(D+\Delta))$ time with $O(\max(D,\Delta \log k))$ bits  at each robot
in the global communication model. 
\end{itemize}
\end{theorem}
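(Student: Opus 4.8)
The plan is to build the dispersion out of a collision-aware breadth-first traversal of the anonymous graph, proving part (a) first (rooted start, local communication) and then bootstrapping to part (b) (general start, global communication). For part (a), all $k$ robots begin at a common root $r$; the smallest-ID robot settles at $r$ and the rest form a pool of \emph{free} robots that the traversal uses to populate the graph level by level. Each settled robot stores (i) a port-indexed table recording, for each of its $\le \Delta$ incident ports, whether the neighbor across that port is occupied and by which robot ID, and (ii) its BFS depth together with the port toward its parent and auxiliary routing state. Item (i) accounts for the $O(\Delta \log k)$ term in the memory bound and lets a robot distinguish tree edges, cross edges, and fresh edges; item (ii), which we bound by $O(D)$ bits, accounts for the $O(D)$ term. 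The correctness target is the invariant that after phase $i$ the settled nodes are exactly a prefix (in BFS order) of the nodes within BFS distance $i$ of $r$, each holding exactly one robot, and that every port table is accurate.

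In phase $i$ the frontier is the set of level-$i$ settled nodes. Using the free robots co-located at each frontier node, that node probes its ports one at a time: a free robot crosses a port, and upon \emph{physical arrival} at the neighbor it uses local communication to test occupancy. If the neighbor is empty, the smallest-ID free robot present settles there, creating a level-$(i+1)$ node; otherwise the probe returns and records the neighbor's occupant in the port table. The essential subtlety is that two distinct frontier nodes may probe a common empty neighbor in the same sub-step; this is resolved by local communication \emph{at that neighbor}, where the smallest ID wins and the losers retreat and mark the port as occupied. After the frontier's ports are exhausted, the remaining free robots are redistributed along tree edges to the newly created level-$(i+1)$ nodes. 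Charging each probe to its out-and-back traversal and each free-robot relocation to a tree path of length $O(D)$, one shows a phase costs $O(\Delta(D+\Delta))$ (the $\Delta$ for serializing the per-node port probes, the $D+\Delta$ for routing and redistribution); since there are at most $D$ phases, the rooted runtime is $O(D\Delta(D+\Delta))$, establishing part (a).

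For part (b) I would reduce the general configuration to the rooted one using the global channel. First, in $O(1)$ rounds every robot broadcasts its ID, so all robots agree on the global minimum-ID robot as a unique leader and adopt its current node as the root $r$ of a single BFS tree. The traversal of part (a) is then run from $r$, except that free robots are no longer all present at $r$: a scattered unsettled robot simply waits until the growing frontier reaches its node, at which point it is absorbed into the free pool and participates in subsequent probing exactly as above. Global communication is used for three things local communication cannot provide when robots are not co-located: leader election, a synchronized global phase counter, and termination detection (each robot announces the round in which it settles, and all robots halt once $k$ settlements have been counted). Because the robots start spread over up to $k$ distinct source nodes rather than pooled at $r$, the absorption of these sources stretches the number of effective phases from $D$ to $\max(D,k)$, giving runtime $O(\max(D,k)\,\Delta(D+\Delta))$ with the same $O(\max(D,\Delta \log k))$ memory.

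The hard part, in both parts, is guaranteeing \emph{collision-free} settling on an anonymous graph. Because nodes carry no identifiers, a probing robot cannot tell a fresh node from an already-settled one without arriving there, and, crucially, two robots at different nodes cannot discover through the global channel alone that they are about to claim the \emph{same} node, since no robot knows a global name for its position (the paper's own observation that neighbors cannot even identify the connecting edge by communication). My way around this is to commit a settlement only upon physical arrival and to break every tie by ID using the local communication that is always available among co-located robots; serializing the per-node port probes then ensures at most one settler is committed per empty node per sub-step. The remaining work is bookkeeping: proving the port-table invariant survives concurrent probes, verifying that redistribution never strands a free robot or double-occupies a node, and bounding the phase count by $\max(D,k)$ in the general case. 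I expect this last step to be the most delicate, since it must account for the worst-case interleaving of source absorption with frontier growth.
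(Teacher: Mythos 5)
Your part~(a) is essentially the paper's construction: a level-by-level BFS in which conflicts over a common level-$(i{+}1)$ neighbor are resolved by ID at the point of physical arrival, with an inner loop of at most $\Delta$ sub-iterations each costing $O(D+\Delta)$ for routing and port (in)validation. The paper pools the unsettled robots at the root and re-dispatches them each sub-iteration, whereas you keep them at the frontier and redistribute laterally; either way the heart of the argument is the lemma that $\Delta$ sub-iterations suffice to finalize all level-$i$ ports (the paper's Lemma~\ref{lemma:bfsrootedbound}), which you assert via ``serializing the per-node port probes'' but do not actually prove --- in particular you never establish that a frontier node that exhausts its local free robots mid-probe gets resupplied without blowing the $O(\Delta(D+\Delta))$ per-phase budget. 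That is a fixable rigor issue, not a wrong idea.

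Part~(b) has a genuine gap. Electing a single global leader and growing \emph{one} BFS tree from its node, while every other unsettled robot ``waits until the growing frontier reaches its node,'' deadlocks: the frontier can only be extended by free robots, and the only free robots initially available to the tree are those co-located with the leader. If the leader starts alone (or its group settles before reaching any other group), the tree has no robots left to probe with, the frontier freezes, and the scattered robots wait forever --- they cannot walk to the root because the graph is anonymous and the global channel carries no position information (as the paper itself notes, neighbors cannot even identify the connecting edge by communication). This is precisely why the paper instead grows a BFS tree in parallel from \emph{every} initial multiplicity node and uses the global channel to build a collision graph, select winners via a maximal independent set, and have losing trees physically relocate their robots to the winner's root; the $\max(D,k)$ factor then comes from up to $k-1$ such subsumption events, each forcing an extra iteration at the current level. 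Your $\max(D,k)$ bound has no corresponding mechanism behind it --- a single tree has at most $D$ levels --- so the bound is asserted rather than derived. To repair part~(b) you would need to let every co-located group explore concurrently and supply a merging rule, at which point you have reconstructed the paper's collision-processing module.
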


Theorem \ref{theorem:0}  improves by a factor of $O(\log k)$ over the $O(\min(m,k\Delta)\cdot \log k)$ time best previously known algorithm \cite{Kshemkalyani2019} in the local communication model 
(see Table \ref{table:comparision}). 
We also prove a time lower bound of $\Omega(k)$ and memory lower bound of $\Omega(\log(\max(k,\Delta)))$ bits at each robot for {\dis} on graphs in the global communication model. 
The implication is that, for constant-degree arbitrary graphs (i.e., when $\Delta=O(1)$), Theorem \ref{theorem:0} is optimal with respect to both memory and time, first such result for arbitrary graphs.  
%
Theorem \ref{theorem:1} improves significantly on the $O(\Delta^D)$ algorithm in the local communication model. 

We establish the following theorem 
in an arbitrary tree.

\begin{theorem}
\label{theorem:2}
Given $k\leq n$ mobile robots in an anonymous 
$n$-node arbitrary tree  $G$ with degree $\Delta$ and diameter $D$:
\begin{itemize}
    \item [a.] For the rooted initial configurations,
{\dis} can be solved in $O(D^2)$ time with $O(\max(D,\Delta \log k))$ bits  at each robot in the local communication model. 
\item [b.] For the general initial configurations,
{\dis} can be solved in $O(D\max(D,k))$ time with $O(\max(D,\Delta \log k))$ bits  at each robot
in the global communication model.
\end{itemize}
\end{theorem}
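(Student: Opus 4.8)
The plan is to solve both parts with a layered, breadth-first settling procedure that exploits the acyclicity of $G$, and to obtain part~(b) from part~(a) by using global communication to reduce a general configuration to a rooted one.

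For part~(a) I would root $G$ at the common start node $r$ and fill the tree one BFS layer per iteration. The settled robots serve as a distributed routing skeleton: each settled robot stores its parent port, its depth, and a child-forwarding table of $O(\Delta\log k)$ bits recording which robot (if any) has already been sent through each child port; a mobile robot carries only its own identity and is steered hop-by-hop by the routers it passes, so that no robot ever stores an explicit root-to-node path. In iteration $i$ the still-unsettled robots, collected at $r$, are pushed outward through the skeleton to depth $i$; each level-$(i-1)$ router forwards an arriving robot into one of its still-unoccupied children, where that robot settles, and any surplus is funnelled back toward $r$ along parent pointers for the next iteration. Because a tree has a unique $r$-to-$v$ path, every forwarding choice is unambiguous and no visited/cycle test is needed---this is exactly what eliminates the $\Delta$-factors of the general-graph bound in Theorem~\ref{theorem:1}. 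A single push-and-return reaches depth $i$ in $O(i)$ rounds, so summing over the at most $D$ layers gives $O(D^2)$ rounds, while the per-robot memory is $O(D)$ for the layer bookkeeping plus $O(\Delta\log k)$ for the forwarding table, i.e.\ $O(\max(D,\Delta\log k))$ bits.

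For part~(b) I would first use the global channel to elect a leader and name its node the root $r$. Co-located robots recognise one another locally and agree on which copies must relocate, and the relocating robots are then gathered to $r$ by a leader-guided depth-first walk that broadcasts the port taken at each step and descends only into branches still known to contain an un-gathered robot. Since the minimal subtree spanning $k$ marked nodes in a tree of diameter $D$ has at most $kD$ edges, such a walk closes in $O(kD)$ rounds; invoking the part~(a) routine from $r$ then adds $O(D^2)$ rounds, for a total of $O(kD+D^2)=O(D\max(D,k))$, with the memory bound inherited unchanged. Note that the part~(a) routine needs only local communication, so it runs verbatim once the rooted configuration has been produced.

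The main obstacle in both parts is routing robots into the correct subtrees with no prior knowledge of subtree sizes. In part~(a) I resolve this by the level-synchronous, collect-at-root-and-re-push schedule, which guarantees that layer $i$ is completed before layer $i+1$ is attempted and hence that a robot is forwarded into a child only when that child is genuinely empty; the re-traversal from $r$ at every layer is precisely what makes the running time quadratic in $D$. The sharper difficulty lies in part~(b): under global communication a robot hears its peers' messages but not their locations, and the tree is anonymous, so the delicate step will be to show that the leader's selective depth-first walk can be driven entirely by the globally broadcast ``robots-still-missing'' signals---pruning every robot-free branch---so that gathering stays within the $O(kD)$ budget instead of degenerating into a full $O(n)$ traversal of $G$.
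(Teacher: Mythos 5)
Your part~(a) routine omits the step that makes a one-pass-per-layer schedule possible: a bottom-up aggregation of per-subtree frontier demands before each downward push. A settled robot at level $j<i-1$ has all of its children already occupied, so to decide how to split the arriving unsettled robots among its child ports it must know how many empty frontier slots lie below each port. Your forwarding table records only which robot was sent through each port, not the residual capacity beneath it, and ``collect at the root and re-push'' does not supply this information either: a blind split sends too many robots into one subtree and too few into another, and the surplus either waits for the next iteration (so a layer is not completed in one push and the number of iterations is no longer bounded by $D$) or bounces back and must be re-routed at a cost of $\Theta(i)$ rounds per bounce, with up to $\Theta(n)$ bounces per layer in the worst case. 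The paper closes exactly this hole with a convergecast: each level-$i$ leaf reports $\delta-1$, internal nodes sum and forward, and the root distributes $\min(nrobots,\sum_{j}demand[j])$ robots split according to the per-port demands, so a single $O(i)$-round downward pass fills the layer; the convergecast itself costs $O(i)$ per layer (simulated in the local model by settled robots shuttling to their parents and back), preserving $O(D^2)$.

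For part~(b) your architecture --- elect one leader, gather all surplus robots to its node, then run the rooted routine --- is genuinely different from the paper's, but the gathering step, which you yourself flag as the delicate point, is where it breaks. Global communication gives message exchange, not positions: an un-gathered robot can announce that it still exists, but it cannot tell the leader which branch of the leader's current node it lies beneath, and the tree is anonymous, so the leader cannot prune a branch without first exploring it. The ``selective'' DFS therefore degenerates to a full traversal costing $\Theta(n)$ rounds, which exceeds $O(D\max(D,k))$ whenever $n\gg kD+D^2$; your $O(kD)$ bound on the Steiner subtree spanning the occupied nodes is correct but unusable, because the leader cannot confine its walk to a subtree it cannot identify. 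The paper avoids search entirely: every multiply-occupied node grows its own BFS tree in parallel, merges are triggered only by physical collisions between trees (detected when a robot of one tree arrives at a node settled by another), and the robots of a subsumed tree relocate along paths inside the already-built tree structures, giving at most $k$ subsumptions at $O(D)$ movement each and hence the $O(Dk)$ additive term. You would need either this collision-driven merging or some other mechanism that replaces search by encounter for part~(b) to go through.
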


We also prove a time lower bound of $\Omega(D^2)$ for {\dis} on trees in the global communication model. The implication is that
the time in Theorem \ref{theorem:2}(a) is optimal. 
Also, the time in Theorem \ref{theorem:1}(a) is optimal for constant-degree arbitrary graphs. 

\vspace{1mm}
\noindent{\bf Challenges and Techniques.}
The well-known  DFS traversal approach \cite{Cormen:2009} was used in the previous results on  {\dis} \cite{Augustine:2018,Kshemkalyani,Kshemkalyani2019}. 
If all $k$ robots are positioned initially on a single node of $G$, then the DFS traversal finishes in $\min(4m-2n+2,\, k\Delta)$ rounds solving {\dis}.  If $k$ robots are initially on $k$ different nodes of $G$, then {\dis} is solved in a single round. 
However, if not all of them are on a single node initially, then the robots on nodes with multiple robots need to reposition (except one) to reach to free nodes and settle.
  The natural approach is to run DFS traversals in parallel to minimize time.

The challenge arises when two or more DFS traversals meet before all robots settle. When this happens, the robots that have not settled yet need to find free nodes. For this, they may need to re-traverse the already traversed part of the graph by the DFS traversal. Kshemkalyani {\it et al.} \cite{Kshemkalyani2019} designed a smarter way to synchronize the parallel DFS traversals so that the total time increases only by a factor of $\log k$ to  $\min(4m-2n+2,k\Delta)\cdot \log k$ rounds, in the worst-case, in the local communication model. However, removing the $O(\log k)$ factor seemed difficult due to the means of synchronization.
We develop in this paper an approach that allows to synchronize DFS traversals without re-traversing the already traversed part of the graph giving us $\min(4m-2n+2,\, k\Delta)$ rounds, as if running DFS starting from all robots in the same node, in the global communication model. 
This is possible due to the information that can be passed to the robots to take their next actions, even if they do not known their positions on $G$. This was not possible in the local communication model and hence the synchronization incurred an $O(\log k)$ factor to synchronize $O(k)$ trees that might be formed in the dispersion process.
For constant-degree arbitrary graphs, the time bound becomes $O(k)$, which is time-optimal. 

Despite efficiency in merging the DFS traversal trees due to global communication, the time bound of $\min(4m-2n+2,\, k\Delta)$ seems to be inherent in algorithms based on a DFS traversal, even in the global model (consider, for example, a rooted initial configuration). The natural way to circumvent this limitation is to run BFS 
traversal to  reach many nodes at once. A naive approach of running BFS gives exponential $O(\Delta^D)$ runtime. Here we design a smarter way of performing BFS so that we can achieve dispersion in arbitrary graphs in $O(D\Delta(D+\Delta))$ time  and in arbitrary trees in $O(D^2)$ time in the rooted initial configurations. The general initial configurations introduced $\max(D,k)$ factor instead of $O(D)$ factor in the time bounds for both arbitrary graphs and trees.

\vspace{1mm}
\noindent{\bf Related Work.}
\label{section:related}
There are three previous studies focusing on {\dis} in the local communication model.
Augustine and Moses Jr.~\cite{Augustine:2018} studied {\dis} 
assuming $k=n$. 
They proved a memory lower bound of $\Omega(\log n)$ bits at each robot and a time lower bound of $\Omega(D)$ ($\Omega(n)$ in arbitrary graphs) for any deterministic algorithm in any graph. 
They then 
provided deterministic algorithms using $O(\log n)$ bits at each robot to solve {\dis} on lines, rings, and trees in $O(n)$ time. For arbitrary graphs, they provided two algorithms, one using $O(\log n)$ bits at each robot with $O(mn)$ time and another using $O(n\log n)$ bits at each robot with $O(m)$ time. 

Kshemkalyani and Ali \cite{Kshemkalyani} provided an $\Omega(k)$ time lower bound for arbitrary graphs for $k\leq n$. 
They then provided three deterministic algorithms for {\dis} in arbitrary graphs: (i) The first algorithm using $O(k\log \Delta)$ bits at each robot with $O(m)$ time, (ii) The second algorithm using $O(D\log \Delta)$ bits at each robot with $O(\Delta^D)$ time, and (iii) The third algorithm using $O(\log(\max(k,\Delta)))$ bits at each robot with $O(mk)$ time. 
Recently, Kshemkalyani {\it et al.} \cite{Kshemkalyani2019} provided two algorithms: (i) the algorithm for arbitrary graph runs in $O(\min(m,k\Delta)\cdot \log k)$ time using $O(\log(\max(k,\Delta)))$ bits memory at each robot and (ii) the algorithm for grid graph runs in $O(\min(k,\sqrt{n}))$ time using $O(\log k)$ bits memory at each robot. 
Randomized algorithms are presented in  \cite{tamc19} to solve {\dis} where the random bits are mainly used to reduce the memory requirement at each robot. 
In this paper,
we
present results in the global communication model.
The previous results on arbitrary graphs and trees are summarized in Table \ref{table:comparision}. 

One problem that is closely related to {\dis} is the graph exploration by mobile robots. The exploration problem has been quite heavily studied in the literature for specific as well as arbitrary graphs, 
e.g., \cite{Bampas:2009,Cohen:2008,Dereniowski:2015,Fraigniaud:2005,MencPU17}. It was shown that a robot can explore an anonymous graph using $\Theta(D\log \Delta)$-bits memory; the runtime of the algorithm is $O(\Delta^{D+1})$ \cite{Fraigniaud:2005}. In the model where graph nodes also have memory, 
Cohen {\it et al.} \cite{Cohen:2008} gave two algorithms: The first algorithm uses $O(1)$-bits at the robot and 2 bits at each node, and the second algorithm uses $O(\log \Delta)$ bits at the robot and 1 bit at each node. The runtime of both algorithms is $O(m)$ with preprocessing time of $O(mD)$. The trade-off between exploration time and number of robots is studied in \cite{MencPU17}. 
The collective exploration by a team of robots is studied in \cite{FraigniaudGKP06} for trees. 
Another problem related to {\dis} is the scattering of $k$ robots in an $n$-node graph. This problem has been studied for rings \cite{ElorB11,Shibata:2016} and grids \cite{Barriere2009}. 
Recently, Poudel and Sharma \cite{Poudel18} provided a $\Theta(\sqrt{n})$-time algorithm for uniform scattering in a grid \cite{Das16}. 
Furthermore, {\dis} is  related to the load balancing problem, where a given
load at the nodes  has to be (re-)distributed among several processors (nodes). This problem has been studied quite heavily in graphs 
\cite{Cybenko:1989,Subramanian:1994}.
We refer readers to 
\cite{Flocchini2012,flocchini2019} for other recent developments in these topics.

\vspace{1mm}
\noindent{\bf Paper Organization.}  We discuss 
details of the model and some lower bounds in Section \ref{section:model}. 
We discuss the DFS traversal of a graph in Section \ref{section:basic}. 
We present a DFS-based algorithm for arbitrary graphs in Section \ref{section:algorithm0}, proving Theorem \ref{theorem:0}.  We then present a BFS-based algorithm for arbitrary graphs in Section \ref{section:algorithm1}, proving Theorem \ref{theorem:1}.
We then present a BFS-based algorithm for arbitrary trees in Section \ref{section:algorithm2}, proving Theorem \ref{theorem:2}.
Finally, we conclude in Section \ref{section:conclusion} with a short discussion. 
\onlyShort{Due to space constraints, one algorithm and many proofs are deferred to the full version attached in Appendix. }

\section{Model Details and Preliminaries}
\label{section:model}
\noindent{\bf Graph.} We consider the same graph model as in \cite{Augustine:2018,Kshemkalyani}. Let $G=(V,E)$ be an $n$-node graph with $m$ edges, i.e., $|V|=n$ and $|E|=m$. $G$ is assumed to be connected, unweighted, and undirected. 
$G$ is {\em anonymous}, i.e., nodes do not have identifiers but, at any node, its incident edges are uniquely identified by a {\em label} (aka port number) in the range $[1,\delta]$, where $\delta$ is the {\em degree} of that node. 
The {\em maximum degree} of $G$ is $\Delta$, which is the maximum among the degree $\delta$ of the nodes in $G$.
We assume that there is no correlation between two port numbers of an edge. 
Any number of robots are allowed to move along an edge at any time. 
The graph nodes do not have memory, i.e., they are not able to store any information. 


\vspace{1mm}
\noindent{\bf Robots.} We also consider the same robot model as in \cite{Augustine:2018,Kshemkalyani,Kshemkalyani2019}. Let $\cR=\{r_{1}, r_{2},\ldots,r_{k}\}$ be a set of $k\leq n$ robots residing on the nodes of $G$. For simplicity, we sometime use $i$ to denote robot $r_i$. No robot can reside on the edges of $G$, but one or more robots can occupy the same node of $G$.  
%
%
Each robot has a unique $\lceil \log k\rceil$-bit ID taken from $[1,k]$. 
When a robot moves from node $u$ to node $v$ in $G$, it is aware of the port of $u$ it used to leave $u$ and the port of $v$ it used to enter $v$. 
%
%
Furthermore, it is assumed that each robot is equipped with memory to store  information, which may also be read and modified by other robots present on the same node.

\vspace{1mm}
\noindent{\bf Communication Model.}
We assume that robots follow the global communication model, i.e., a robot is capable to communicate with any other robot in the system, irrespective of their positions in the graph nodes. However, they will not have the position information as graph nodes are anonymous. This is in contrast to the local communication model where  a robot can only communicate with other robots present on the same node.

\vspace{1mm}
\noindent{\bf Time Cycle.}
At any time a robot $r_i\in \cR$ could be active or inactive. When a robot $r_i$ becomes active, it performs
the ``Communicate-Compute-Move'' (CCM) cycle as follows. 
\begin{itemize}
\item 
{\em Communicate:} For each robot $r_j\in \cR$ that is at node some node $v_j$, a robot $r_i$ at node $v_i$ can observe the memory of $r_j$.  
Robot~$r_i$ can also observe its own memory. 
\item {\em Compute:} $r_i$ may perform an arbitrary computation
using the information observed during the ``communicate'' portion of
that cycle. This includes determination of a (possibly) port to 
use to exit $v_i$ and the information to store in 
the robot $r_j$ that is at $v_i$.
\item
{\em Move:} At the end of the cycle, $r_i$ writes new information (if any) in the memory of a robot $r_k$ at $v_i$,  and exits $v_i$ using the computed port to reach to a neighbor of $v_i$. 
\end{itemize}

\vspace{1mm}
\noindent{\bf Time and Memory Complexity.}
We consider the synchronous setting where 
every robot is active in every CCM cycle and they perform the cycle in a synchrony.
Therefore, time is measured in {\em rounds} or {\em steps} (a cycle is a round or step).
%
Another important parameter is memory. Memory comes from a single source -- the number of bits stored at each robot. 
%
%

\vspace{1mm}
\noindent{\bf Mobile Robot Dispersion.} The {\dis} problem can be formally defined as follows.
\begin{definition} [{\dis}]
Given any $n$-node anonymous graph $G=(V,E)$ having $k\leq n$ mobile robots positioned initially arbitrarily on the nodes of $G$, the robots reposition autonomously to reach a configuration where each robot is on a distinct node of $G$.
\end{definition}

The goal is to solve {\dis} optimizing two performance metrics: 
(i) {\bf Time} -- the number of rounds (steps), and
(ii) {\bf Memory} -- the number of bits stored at each robot.

\subsection{Some Lower Bounds}
We discuss here some time and memory lower bounds in the global communication model, which show the difficulty in obtaining fast runtime and lower memory algorithms. 
Consider the case of any rooted initial configuration of $k=n$ robots on a single node $v_{root}$ of an arbitrary graph $G$ with diameter $D$.
A time lower bound of $\Omega(D)$ is immediate since a robot initially at $v_{root}$ needs to traverse $\Omega(D)$ edges (one edge per time step) to reach a node that is $\Omega(D)$ away from $v_{root}$. 
For $k\leq n$, we present the following lower bound. 

\begin{theorem}
Any deterministic algorithm for {\dis} on graphs requires
$\Omega(k)$ steps in the global communication model.
\end{theorem}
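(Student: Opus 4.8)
The plan is to exhibit one family of admissible instances on which every deterministic algorithm must spend $\Omega(k)$ rounds, for the simple reason that some robot is forced to physically traverse $\Omega(k)$ edges. The guiding principle is that the global communication model cannot help here: communication only changes what a robot computes in a round, never how far it can move, so the obstruction is purely the geometry of the graph together with the requirement that the $k$ robots end on $k$ distinct nodes.

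First I would fix the line $L_n$ on $n$ nodes $u_1 - u_2 - \cdots - u_n$ (endpoints of degree $1$, internal nodes of degree $2$, so $\Delta = 2$), which is connected, anonymous, and hence an admissible instance, and take the rooted initial configuration placing all $k \le n$ robots on the endpoint $u_1$. The key combinatorial observation is a distance bottleneck: on $L_n$ there are exactly $k-1$ nodes within distance $k-2$ of $u_1$, namely $u_1, \ldots, u_{k-1}$, so any configuration in which the robots occupy $k$ distinct nodes must place at least one robot on a node at distance $\ge k-1$ from $u_1$.

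Next I would convert this distance into time. In the Move phase of a CCM cycle a robot leaves its current node through at most one port and arrives at a single neighbour, so its graph distance from $u_1$ grows by at most $1$ per round. A robot that starts at $u_1$ (distance $0$) therefore cannot reach a node at distance $\ge k-1$ before round $k-1$. Since {\dis} is not solved until such a node is occupied, no algorithm can terminate before round $k-1$, which is $\Omega(k)$.

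The main point to get right is not a deep obstacle but making the role of the model explicit: I would stress that global communication grants no shortcut, because it affects only the Compute step and leaves the one-edge-per-round movement bound intact, and that edge congestion plays no role either, since the model allows arbitrarily many robots to cross an edge in the same round, so travelled distance is the only resource that must be paid for. I would also remark that the same argument in fact applies to randomized algorithms, and that combined with the immediate $\Omega(D)$ bound on a rooted configuration it yields an $\Omega(\max(D,k))$ lower bound on this instance.
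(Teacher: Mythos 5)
Your proof is correct and follows essentially the same route as the paper's: a rooted configuration on a line graph, where placing $k$ robots on distinct nodes forces some robot to travel $k-1$ edges at one edge per round. Your version merely adds the explicit counting of nodes within distance $k-2$ and the remark that global communication affects only the Compute step, neither of which changes the argument.
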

\onlyLong{
\begin{proof}
Consider a line graph $G$ and a rooted initial configuration of $k\leq n$ robots on a single node $v_{root}$ of $G$.
In order for the robots to solve {\dis}, they need to dock at $k$ distinct nodes of $G$, exactly one on each node. To reach a node to dock, some robot must travel $k - 1$
edges of $G$, taking $k-1$ time steps.
\end{proof}
}

For $k=n$, we present the following time lower bound for trees. 

\begin{theorem}
For $k=n$, there exists a tree $T$ with $n$ nodes and diameter (height) $D$ such that any deterministic algorithm for {\dis} requires $\Omega(D^2)$ steps  in the global communication model. 
\end{theorem}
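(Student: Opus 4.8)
The plan is to prove the bound by exhibiting a single bad tree together with a worst-case rooted placement and showing that \emph{every} deterministic algorithm is slow on it; since a time lower bound is only interesting in the \emph{global} model if global communication provably cannot be exploited, the whole argument should be organized around that point. I would take $T$ to be a ``broom'' rooted at $r$: a stem $r=u_0-u_1-\cdots-u_{D/2}$, and at $u_{D/2}$ attach $\Theta(D)$ pendant paths (``teeth''), each of length $D/2$. Then the height, hence the diameter, is $\Theta(D)$ while $n=\Theta(D^2)$, and I place all $k=n$ robots at $r$. The deepest tooth-nodes lie at distance $D$ from $r$, which already gives the trivial $\Omega(D)$; the content is to amplify this to $\Omega(D^2)$.

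For the amplification I would track a progress measure $\Phi(t)$ equal to the number of tooth-nodes (those at depth $>D/2$) carrying a settled robot after round $t$, which must rise from $0$ to $\Theta(D^2)$. The intended charging is that, because $G$ is anonymous and a robot cannot (within the per-robot memory budget) store its full $\Omega(D)$-length root-path, a robot that enters a tooth cannot tell from its own state, nor from any other robot's broadcast, \emph{which} node of that tooth is the current free frontier; it can discover this only by physically walking down the tooth. I would then build an adversary that answers port/degree queries so as to keep the teeth mutually indistinguishable and to reveal the position of a free deep node only to a robot that actually reaches it, and argue that each unit increase of $\Phi$ requires a traversal of length $\Omega(D)$ that cannot be amortized across robots. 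Pushing the settled frontier through $\Omega(D)$ successive depth layers then costs $\sum_{d=1}^{D}\Omega(d)=\Omega(D^2)$ work that the adversary forces to be essentially serial, giving the claimed bound.

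The main obstacle is exactly the step that makes this a global-model statement: ruling out that communication collapses the bound to $O(D)$ via a ``scout-then-flood'' strategy, in which scouts learn all tooth lengths in $O(D)$ rounds, broadcast them, and the remaining robots march to precomputed positions. I would confront this directly by using that a broadcast conveys a robot's memory \emph{contents} but not its \emph{position}, so under the permitted memory a recipient cannot match a broadcast ``free node'' to its own current location, and the adversary's indistinguishability invariant must be maintained across all rounds simultaneously (a coupling/symmetry argument over the automorphisms of $T$ that permute the teeth). This is the delicate part, because distinct robot IDs and distinct port labels give the algorithm genuine symmetry-breaking power, and the argument must show that such power still does not let the robots \emph{localize} free deep nodes without $\Omega(D)$ probing. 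Establishing this invariant --- equivalently, proving that no amount of position-agnostic shared information certifies free deep nodes faster than walking to them --- is where the real work lies; the potential/charging bookkeeping built on top of it is routine.
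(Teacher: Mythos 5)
There is a genuine gap, and it is located exactly where you say ``the real work lies'': the indistinguishability invariant you need cannot be maintained on the broom, because the broom is in fact an easy instance. Ports at the central vertex $u_{D/2}$ carry distinct labels $1,\dots,\Theta(D)$, so every robot that passes through $u_{D/2}$ (and every robot must) acquires a global \emph{name} for each tooth. A scout-then-flood strategy then runs in $O(D)$ rounds even against an adversary that fixes tooth lengths adaptively: all robots walk the stem in $D/2$ rounds dropping one per node; one scout per tooth walks to its leaf, and since each scout knows the port index it took at $u_{D/2}$, its broadcasts are attributable (``tooth $j$ has length $\ell_j$''); the remaining robots waiting at $u_{D/2}$ then partition by ID into groups of sizes $\ell_j-1$ and march down, dropping one robot per node. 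A recipient of a broadcast does not need to know its own position in $G$ to use this information --- it only needs to know it is at $u_{D/2}$, which it does. So no adversary argument on this tree can yield $\Omega(D^2)$. A second, smaller problem: you invoke a per-robot memory budget (``cannot store its full $\Omega(D)$-length root-path''), but the theorem is a pure time lower bound with no memory restriction, so the argument must hold for unbounded memory.

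The paper takes a different and much shorter route: it cites the collaborative tree-exploration lower bound of Disser et al., which exhibits a tree on $n$ vertices of height $D$ on which $k=n$ agents need $D^2/3$ steps \emph{even with global communication and uniquely labelled nodes}, and combines it with the standard observation that for $k=n$ a dispersion algorithm visits every node and hence solves exploration; since the anonymous-node model here is weaker, the bound transfers. The hard instance in that argument is not a broom but an adaptively revealed tree in which, at each of $\Omega(D)$ levels, the adversary hides which branch continues deeper among enough candidates that even $n$ robots cannot probe them all in parallel --- it is precisely the structure that defeats scout-then-flood, which your construction does not. If you want a self-contained proof you would need to reconstruct something like that gadget; the potential-function bookkeeping you describe is indeed routine once such a construction is in hand, but the broom is not it.
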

\onlyLong{
\begin{proof}
We use the lower bound proof for exploration due to Disser {\it et al.}~\cite{DisserMNSS16} to prove this lower bound. It has been argued in \cite{Augustine:2018} that a lower bound for exploration applies to {\dis}. 
We argue here that the lower bound of \cite{DisserMNSS16} applies for  {\dis} in the global communication model. 
Disser {\it et al.} \cite{DisserMNSS16} proved a lower bound for exploration assuming rooted initial configuration of $k=n$ robots are on a single node $v_{root}$ of tree $T$. Moreover, they assumed that the nodes of tree $T$ have unique identifiers and the robots have global communication. Specifically, they showed that: Using $k = n$
robots, there exists a tree $T$ on $n$ vertices and with diameter (height) $D = \omega(1)$ such that any deterministic exploration strategy 
requires at least $D^2/3=\Omega(D^2)$ steps to explore $T$. As our model is weaker because the nodes are indistinguishable, the $\Omega(D^2)$ steps lower bound applies to {\dis} in trees in the global communication model.   
\end{proof}
}

We finally prove a lower bound of $\Omega(\log (\max(k,\Delta)))$ bits at each robot  for any deterministic algorithm for {\dis} on graphs.

\begin{theorem}
\label{theorem:lower}
Any deterministic algorithm for {\dis} on $n$-node anonymous graphs requires  $\Omega(\log (\max(k,\Delta)))$ bits at each robot in the global communication model, where $k\leq n$ is the number of robots and $\Delta$ is  the maximum degree.
\end{theorem}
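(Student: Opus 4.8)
The plan is to establish the bound in two independent pieces, $\Omega(\log k)$ and $\Omega(\log \Delta)$, and combine them: since $\log(\max(k,\Delta)) = \max(\log k, \log \Delta)$, a lower bound of $\Omega(\log k)$ on one worst-case instance together with one of $\Omega(\log \Delta)$ on another yields $\Omega(\log(\max(k,\Delta)))$ once the adversary is allowed to pick whichever instance has the larger parameter. Both pieces will rest on a single \emph{indistinguishability lemma} tailored to the global communication model. The only part of a robot visible to the algorithm is the content of its $b$-bit memory, since the \emph{Communicate} step lets a robot observe the memories (not the positions) of all robots. Hence a robot's action in a round is a deterministic function solely of the pair (its own $b$-bit memory, the multiset of all robots' memories). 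I would first prove: if two robots are co-located on the same node and hold identical memory contents at the start of a round, then they are fed identical arguments by this function, so they compute the same exit port and the same memory update; being at the same node, they reach the same neighbor, and by induction they remain forever co-located with identical contents, never settling on distinct nodes.

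For the $\Omega(\log k)$ piece I would take a trivial graph (a single edge padded to $n$ nodes, or a star) together with the rooted initial configuration placing all $k$ robots on one node $v$. Suppose the algorithm uses $b$ bits per robot, so there are at most $2^b$ distinct memory contents. If $2^b < k$, then \emph{regardless of how the $\lceil \log k\rceil$-bit IDs are encoded into the $b$-bit memory}, two of the $k$ robots initially on $v$ must hold the same content; by the indistinguishability lemma they never separate, so {\dis} is not solved. Therefore $2^b \ge k$, i.e.\ $b \ge \lceil \log k\rceil = \Omega(\log k)$. This argument is what makes the distinctness of robot states (and hence the $\Omega(\log k)$ ID cost) genuinely necessary rather than merely assumed.

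For the $\Omega(\log \Delta)$ piece I would use the star $K_{1,\Delta}$ with center $c$ of degree $\Delta$ and $\Delta$ leaves ($n=\Delta+1$), placing $k=\Delta+1$ robots on $c$. Any dispersed configuration must place exactly one robot on each leaf, so the $\Delta$ departing robots must collectively exit $c$ through its $\Delta$ \emph{distinct} ports. If $2^b < \Delta$, pigeonhole again forces two robots to share a memory content, and by the lemma they exit through the same port and stay co-located, leaving some leaf unoccupied -- a contradiction; hence $b=\Omega(\log \Delta)$. Combining the two instances gives the claimed $\Omega(\log(\max(k,\Delta)))$ bound. I expect the main obstacle to be making the indistinguishability lemma fully rigorous under global communication -- specifically, ruling out that two equal-memory robots exploit the shared global multiset of memories to break their tie (they cannot, because the decision function receives identical arguments for both) -- together with the parameter bookkeeping: the $\log \Delta$ instance naturally uses $\Theta(\Delta)$ robots, so the combined statement must be read as a worst-case bound over the adversary's choice of graph, $k$, and initial configuration, and forcing $\log\Delta$ memory with $k \ll \Delta$ would require a separate navigation-based argument.
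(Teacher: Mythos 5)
Your $\Omega(\log k)$ half is essentially the paper's argument, made more careful: the paper simply notes that identically-programmed robots starting from the same rooted configuration "perform the same moves" and cites Augustine and Moses~Jr., whereas you spell out the indistinguishability invariant explicitly and, importantly, verify that global communication cannot break the tie because two co-located robots with equal memories receive identical arguments (own memory, global multiset of memories) to the decision function. That part is sound and is, if anything, a sharpening of the paper's one-line justification.

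The $\Omega(\log\Delta)$ half has a genuine gap, and it is exactly the one you flag at the end but do not close. Your instance $K_{1,\Delta}$ with $k=\Delta+1$ robots lies in the regime $k>\Delta$, where $\max(k,\Delta)=k$ and the bound $\Omega(\log k)$ from the first half already subsumes $\Omega(\log\Delta)$ on that instance; the star argument therefore proves nothing beyond the first piece. As the paper itself observes, the $\log\Delta$ term only "comes into play when $\Delta>k$", i.e., one must exhibit an instance with few robots ($k\ll\Delta$) on a graph of maximum degree $\Delta$ that still forces $\Omega(\log\Delta)$ bits. The paper's proof supplies precisely the "separate navigation-based argument" you defer: a robot at a degree-$\Delta$ node must be able to distinguish (and over successive rounds, recall) which of the $\Delta$ ports it has taken, since the ports lead to anonymous, indistinguishable neighbors; with $o(\log\Delta)$ bits it cannot guarantee reaching all neighbors, and global communication does not help because the robots cannot tell which nodes behind which ports are already visited or occupied. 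Without some version of that argument your proof establishes only $\Omega(\log k)$, not $\Omega(\log(\max(k,\Delta)))$. (To be fair, the paper's own version of this step is itself stated informally rather than as a full adversary argument over port labelings, but it is the step your write-up explicitly leaves open.)
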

\onlyLong{
\begin{proof}
The memory lower bound of $\Omega(\log k)$ bits at each robot is immediate. Consider a rooted initial configuration of all $k$ robots on a single node $v_{root}$ of $G$. Since all robots run the same
deterministic algorithm, all the robots perform the  same moves. Therefore, arguing similarly as in Augustine and Moses Jr.~\cite{Augustine:2018}, we have $\Omega(\log k)$ bits memory lower bound at each robot. The memory lower bound of $\Omega(\log \Delta)$ bits at each robot
comes into play when $\Delta>k$. In this situation, to correctly recognize $\Delta$ different ports of a maximum degree node $v$, a robot needs $\Omega(\log \Delta)$ bits. Otherwise, the robot may not be able to move to all the neighbors of $v$, and hence {\dis} may not be achieved.
The global communication model does not help since 
the robots cannot differentiate which nodes next to its ports are already visited and which are not. 
\end{proof}
}

\section{DFS traversal of a Graph (Algorithm $DFS(k)$)}
\label{section:basic}
Consider an $n$-node arbitrary graph $G$ as defined in Section \ref{section:model}. Let $C_{init}$ be the initial configuration of $k\leq n$ robots positioned on a single node, say $v$, of $G$. 
Let the robots on $v$ be represented as $N(v)=\{r_1,\ldots, r_k\}$, where $r_i$ is the robot with ID $i$.  
We describe here a DFS traversal algorithm, $DFS(k)$, that disperses all the robots on the set $N(v)$ to the $k$ nodes of $G$ guaranteeing exactly one robot on each node. 
$DFS(k)$ 
will be heavily used in Section \ref{section:algorithm0} as a basic building block.

Each robot $r_i$ stores in its memory four variables $r_i.parent$ (initially assigned $null$), $r_i.child$ (initially assigned $null$), $r_i.treelabel$ (initally assigned $\top$), and $r_i.settled$ (initially assigned 0).  
%
%
%
 %
$DFS(k)$ executes in two phases, $forward$ and $backtrack$ \cite{Cormen:2009}. 
Variable $r_i.treelabel$ stores the ID of the smallest ID robot. Variable $r_i.parent$ stores the port from which $r_i$ entered the node where it is currently positioned in the forward phase.  
Variable $r_i.child$ stores the smallest port of the node  it is currently positioned at that has not been taken yet (while entering/exiting the node). 

We are now ready to describe $DFS(k)$. 
In round 1,
the maximum ID robot $r_k$ writes  $r_k.treelabel \leftarrow 1$ (the ID of the smallest robot in $N(v)$, which is 1),  $r_k.child\leftarrow 1$ (the smallest port at $v$ among $P(v)$), 
and $r_k.settled\leftarrow 1$. 
The robots $N(v)\backslash\{r_k\}$ exit $v$ following port $r_k.child$; $r_k$ stays (settles) at $v$. 
%
In the beginning of round 2, the robots $N(w)=N(v)\backslash\{r_k\}$ reach a neighbor node $w$ of $v$.
Suppose the robots entered $w$ using port $p_w\in P(w)$.
As $w$ is free, robot $r_{k-1}\in N(w)$ writes $r_{k-1}.parent\leftarrow p_w$, $r_{k-1}.treelabel \leftarrow 1$ (the ID of the smallest robot in $N(w)$), and $r_{k-1}.settled\leftarrow 1$.
If $r_{k-1}.child\leq\delta_w$, $r_{k-1}$ writes
$r_{k-1}.child\leftarrow r_{k-1}.child+1$ if port $r_{k-1}.child+1\neq p_w$ and $r_{k-1}.child+1\leq \delta_w$, otherwise $r_{k-1}.child\leftarrow r_{k-1}.child+2$. 
The robots $N(w)\backslash\{r_{k-1}\}$ decide to continue DFS in forward or backtrack phase as described below.

\begin{itemize}
\item ({\bf forward phase}) if ($p_w=r_{k-1}.parent$ or $p_w=$ old value of $r_{k-1}.child$) and (there is (at least) a port at $w$ that has not been taken yet).
The robots $N(w)\backslash\{r_{k-1}\}$ exit $w$ through port $r_{k-1}.child$.  

\item ({\bf backtrack phase}) if ($p_w=r_{k-1}.parent$ or $p_w=$ old value of $r_{k-1}.child$) and (all the ports of $w$ have been taken already).
The robots $N(w)\backslash\{r_{k-1}\}$ exit $w$ through port $r_{k-1}.parent$.
\end{itemize}

Assume that in round 2, the robots decide to proceed in forward phase. 
In the beginning of round 3, 
$N(u)=N(w)\backslash\{r_{k-1}\}$ robots reach some other node $u$ (neighbor of $w$) of $G$. The robot $r_{k-2}$ stays at $u$ writing necessary information in its variables. In the forward phase in round 3, the robots $N(u)\backslash\{r_{k-2}\}$ exit $u$ through port $r_{k-2}.child$.
However, in the backtrack phase in round 3, $r_{k-2}$ stays at $u$ and robots $N(u)\backslash\{r_{k-2}\}$ exit $u$ through port $r_{k-2}.parent$. This takes robots $N(u)\backslash\{r_{k-2}\}$ back to node $w$ along $r_{k-1}.child$. Since $r_{k-1}$ is already at $w$, $r_{k-1}$ updates $r_{k-1}.child$ with the next port to take.
Depending on whether $r_i.child\leq \delta_w$ or not,
the robots $\{r_1,\ldots,r_{k-3}\}$ exit $w$ using either $r_{k-1}.child$ (forward phase) or $r_{k-1}.parent$ (backtrack phase). 

There is another condition, denoting the onset of a cycle, under which choosing backtrack phase is in order. When the robots enter $x$ through $p_x$ and robot $r$ is settled at $x$, 
\begin{itemize}
\item ({\bf backtrack phase}) if ($p_x\neq r.parent$ and $p_x\neq$ old value of $r.child$). The robots exit $x$ through port $p_x$ and no variables of $r$ are altered.
\end{itemize}
This process then continues for $DFS(k)$ until at some node $y\in G$, $N(y)=\{r_1\}$. The robot $r_1$ then stays at $y$ and $DFS(k)$ finishes. 

\begin{lemma}
Algorithm $DFS(k)$ correctly solves {\dis} for  $k\leq n$ robots initially positioned on a single node of a $n$-node arbitrary graph $G$  in $ \min(4m-2n+2,  k\Delta)$ rounds using $O(\log(\max(k,\Delta)))$ bits at each robot.
\label{dfscorrect}
\end{lemma}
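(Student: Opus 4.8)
The plan is to establish the three claims—correctness, the $\min(4m-2n+2,\,k\Delta)$ round bound, and the $O(\log(\max(k,\Delta)))$ memory bound—separately, treating $DFS(k)$ as a faithful distributed simulation of a sequential DFS in which the single group of not-yet-settled robots plays the role of the DFS ``token,'' while the robots settled so far mark exactly the nodes visited by the simulated DFS. Each round the group moves along exactly one edge, so rounds correspond to edge traversals, which is the fact I would lean on throughout.

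For correctness, I would maintain by induction on rounds the invariant that, at the start of each round, (i) the settled robots occupy a set $S$ of distinct nodes equal to the set of nodes visited so far, one robot per node; (ii) the unsettled robots form a single group located at one node $x$; and (iii) the $parent$ and $child$ fields stored at the settled nodes encode the current DFS stack and the next port to probe at each node. The three clauses of the algorithm are then matched to the three situations a sequential DFS encounters: entering a free node (the forward phase—settle the highest-ID robot of the group and advance $child$), finding every port of the current node already taken (the backtrack phase—exit through $parent$), and reaching an already-settled node through a port that is neither $parent$ nor the old value of $child$ (the cycle/back-edge case—return immediately through the same port, altering no field). Since $G$ is connected with $n\ge k$ nodes, the simulated DFS never stalls while unsettled robots remain, so a new robot settles at each of the first $k$ discovered nodes; when the group shrinks to $\{r_1\}$ that last robot settles, and every robot occupies a distinct node, giving {\dis}.

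For the time bound I would partition the traversed edges into tree edges and non-tree (back) edges of the simulated DFS. Each tree edge is crossed exactly twice (once forward, once on backtrack), and each back edge is crossed at most four times, since it can be probed and returned—two crossings—from each of its two endpoints. In the extremal case where the DFS spans the whole graph there are $n-1$ tree edges and $m-(n-1)$ back edges, giving at most $2(n-1)+4(m-n+1)=4m-2n+2$ traversals. Independently, at most $k$ nodes are ever visited and each has degree at most $\Delta$, so the total number of incident-port probes, hence traversals, is bounded by the degree sum over visited nodes, yielding the $k\Delta$ estimate; taking the smaller of the two bounds gives $\min(4m-2n+2,\,k\Delta)$ rounds.

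The memory bound is the easiest step: each robot stores only $parent$ and $child$ (each a port label in $[1,\Delta]$, costing $O(\log\Delta)$ bits), $treelabel$ (a robot ID in $[1,k]$, costing $O(\log k)$ bits), and the single bit $settled$, for a total of $O(\log k+\log\Delta)=O(\log(\max(k,\Delta)))$ bits. I expect the main obstacle to be the time analysis—specifically, justifying that each back edge is crossed at most four times and pinning down the exact constant $4m-2n+2$ rather than a loose $O(m)$, and then reconciling this with the $k\Delta$ estimate in the regime $k<n$, where the DFS halts as soon as the $k$-th robot settles and therefore need not explore all of $G$.
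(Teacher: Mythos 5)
Your proof is correct and follows essentially the same route as the paper's: both treat $DFS(k)$ as a faithful simulation of a sequential DFS on a port-numbered graph (the paper defers this step to a citation of Augustine and Moses Jr., whereas you supply the inductive invariant explicitly), and both obtain $4m-2n+2$ from the standard count of two traversals per tree edge and four per non-tree edge, with the $k\Delta$ term coming from a degree sum over the at most $k$ visited nodes. The one caveat---shared equally by the paper's own proof---is that the exact constant in the $k\Delta$ term is argued loosely (a port of a visited node can be exited more than once, e.g.\ once forward and once on a back-edge probe from the other endpoint), but this slack is immaterial to the $O(\min(m,k\Delta))$ bound actually used in Theorem~\ref{theorem:0}.
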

\onlyLong{
\begin{proof}
We first show that {\dis} is achieved by $DFS(k)$. Because every robot starts at the same node and follows the same path as other not-yet-settled robots until it is assigned to a node, $DFS(k)$ resembles the DFS traversal of an anonymous port-numbered graph \cite{Augustine:2018} with all robots starting from the same node. 
Therefore, $DFS(k)$ visits $k$ different nodes where each robot is settled. 


We now prove time and memory bounds. In $k \Delta$ rounds, $DFS(k)$ visits at least $k$ different nodes of $G$. If $4m-2n+2<  k \Delta$, $DFS(k)$ visits all $n$ nodes of $G$. Therefore, it is clear that the runtime of $DFS(k)$ is $ \min(4m-2n+2, k\Delta)$ rounds. Regarding memory, variable $treelabel$  takes $O(\log k)$ bits, $settled$ takes $O(1)$ bits, and $parent$ and $child$ take $O(\log \Delta)$ bits.
The $k$ robots can be distinguished through $O(\log k)$ bits since their IDs are in the range $[1,k]$.
Thus, each robot requires $O(\log(\max(k,\Delta)))$ bits.  
\end{proof}
}

\section{Algorithm for Arbitrary Graphs (Theorem \ref{theorem:0})}
\label{section:algorithm0}
We present and analyze {\em Graph\_Disperse\_DFS}, a DFS-based algorithm that solves {\dis} of $k\leq n$ robots on an arbitrary $n$-node graph in $O(\min(m,k\Delta ))$ time with $O(\log (\max(k,\Delta)))$ bits of memory at each robot in the global communication model. 
This algorithm improves the $O(\min(m,k\Delta)\cdot \log k)$ time of the best previously known algorithm \cite{Kshemkalyani} for arbitrary graphs (Table \ref{table:comparision}) in the local communication model.

\begin{algorithm}[!h]
{\small
Initialize: $TID\_Grow$, $TID\_Collect$, $CID$, $CID\_old$, $winner$, $leader$, $home$, $state$\\
\If{$state=grow$}{ 
 \If{node is free}{
  highest ID robot $x$ from highest $CID$ group having $state=grow$ settles; $x.state\gets settled$
 }
 \If{$CID\neq r.CID$}{
  one such robot from each $CID$ group broadcasts $Subsume(CID,r.CID)$
 }
 Subsume\_Graph\_Processing\\
 \If{$CID$ is node in Subsume graph}{
  $CID\_old\gets CID$; $CID\gets$ $winner$ in my component of Subsume graph\\
  Let $x\gets \min_{j}(j.TID\_grow=winner \wedge j.state=grow \wedge j \mbox{ is at same node as } i)$\\
  $x.home\gets r.ID$; $x.leader\gets 1$; $x.state\gets collect$; $x.TID\_Collect\gets x.TID\_Grow$\\
  $x$ begins DFS $Collect(TID\_Collect)$\\
  \If{$i\neq x \wedge state=grow$}{
   $state\gets subsumed$; STOP
  }
 }
 \Else{
  continue DFS $Grow(TID\_Grow)$
 }
}
\ElseIf{$state=collect$}{
 Subsume\_Graph\_Processing\\
 \If{$CID$ is node in Subsume graph}{
  $CID\_old\gets CID$; $CID\gets$ $winner$ in my component of Subsume graph\\
  $state\gets subsumed$; STOP\\
  \If{$leader=1$}{
   $leader\gets 0; home\gets\perp$
  }
 }
 \Else{
  \If{node is free $\vee CID=r.CID=r.CID\_old \vee CID\neq r.CID$}{
   backtrack, as part of DFS $Collect(TID\_Collect)$
  }
  \ElseIf{$CID=r.CID\neq r.CID\_old$}{
   \If{$\exists x\,|\, x.leader=1 \wedge x.home=r.ID \wedge$ all ports at $r$ have been explored}{
    \If{$x=i$}{
     $x.leader\gets 0; x.home\gets\perp$
    }
    $state\gets grow$; $TID\_Grow\gets x.TID\_Grow$\\
    $i$ continues DFS $Grow(TID\_Grow)$
   }
   \Else{
    $i$ continues DFS $Collect(TID\_Collect)$ of $x \,|\,x.leader=1$, along with $x$ if $x$ is backtracking to its parent in DFS $Collect(x.TID\_Collect)$
   }
  }
 }
}
} 
\caption{Algorithm {\em Graph\_Disperse\_DFS} to solve \dis\ in global model. Code for robot $i$ in a round at any node. $r$ denotes a settled robot, if any, at that node.}
\label{algo1g}
\end{algorithm}
\setlength{\textfloatsep}{0pt}


\begin{algorithm}[t]
{\small
\setcounter{AlgoLine}{34}
\ElseIf{$state=subsumed$}{
 Subsume\_Graph\_Processing\\
 \If{$CID$ is node in Subsume graph}{
  $CID\_old\gets CID$; $CID\gets$ $winner$ in my component of Subsume graph
 }
 \Else{
  \If{$\exists$ arrived robot $x \,|\,x.state=collect \wedge x.leader=1$ $\wedge$ $x$ is backtracking to its parent in DFS $Collect(TID\_Collect)$}{
   $state\gets collect; TID\_Collect\gets x.TID\_Collect$\\
   \If{$x.home=r.ID \wedge $ all ports at $r$ have been explored}{
    $state\gets grow; TID\_Grow\gets x.TID\_Grow$\\
    $i$ continues DFS $Grow(TID\_Grow)$ along with $x$
   }
   \Else{ 
    $i$ continues DFS $Collect(TID\_Collect)$ along with $x$
   }
  }
 }
}
\ElseIf{$state=settled$}{
 Subsume\_Graph\_Processing\\
 \If{$CID$ is node in Subsume graph}{
  $CID\_old\gets CID$; $CID\gets$ $winner$ in my component of Subsume graph
 }
}
\underline{Subsume\_Graph\_Processing}\\
receive Subsume messages; build Subsume graph $S$\\
\If{node with no incoming edge in my component of $S$ exists}{
 $winner\gets \min_{CID}$($CID$ of nodes with no incoming edge in my component of $S$)
}
\Else{
 $winner\gets$ $\min_{CID}$(cycle of CIDs in my component)
}
}

\label{algo1g-p2}
\end{algorithm}

\subsection{The Algorithm}
The algorithm is based on DFS traversal. In general, a robot may operate in one of two interchangeable phases: GROW and COLLECT. As these are independent, a separate set of DFS variables: $parent$, $child$, is used for operating in the two phases. The following additional variables are used.
(i) $TID\_Grow$: Tree ID, of type robot identifier, is the ID of the DFS tree in the GROW phase with which the robot is associated. Initially, $TID\_Grow\gets$ minimum ID among the colocated robots.
(ii) $TID\_Collect$: Tree ID, of type robot identifier, is the ID of the DFS tree in the COLLECT phase with which the robot is associated. Initially, $TID\_Collect\gets\perp$.
(iii) $CID$: for component ID, of type robot identifier, is used to denote the component associated with the GROW phase of the DFS. Initially, $CID\gets TID\_Grow$.
(iv) $CID\_old$: for earlier component ID, of type robot identifier, is used to denote the earlier value of component ID just before the most recent component ID ($CID$) update, associated with the GROW phase of the DFS. Initially, $CID\_old\gets TID\_Grow$.
(v) $winner$: of type robot identifier. When multiple components collide/merge, this is used to indicate the winning component ID that will subsume the other components. Initially, $winner\gets\perp$. 
(vi) $leader$: of type boolean. This is set to 1 if the robot is responsible for collecting the various robots distributed in the component. Initially, $leader\gets 0$.
(vii) $home$:  of type robot identifier. The robot identifier of a settled robot is used to identify the origin node of the leader robot that is responsible for collecting the scattered robots in the component back to this origin node. Initially, $home\gets\perp$.
(viii) $state$: denotes the state of the robot and can be one of $\{grow, collect, subsumed, settled\}$. Initially, $state\gets grow$.

In the initial configuration, there are groups of robots at different nodes. Each robot has its $TID\_Grow$ set to the minimum ID among the colocated robots, and its $state=grow$. The robots from a node move together in a DFS traversal, to locate free nodes and settle one by one. As they do the DFS traversal $Grow(TID\_Grow)$, they extend the DFS tree that is associated with the $TID\_Grow$. Each growing DFS tree is also associated with a component ID, $CID$, that is initialized to the $TID\_Grow$. Multiple DFS trees associated with different $CID$s may meet at a node in any round; specifically, a DFS tree for component $CID$ may meet another component $r.CID$ for some other DFS tree, where $r$ is the robot that is settled at that node. In this case, one robot from the newly arrived robots of the DFS tree component $CID$ broadcasts a $Subsume(CID,r.CID)$ message. This is to indicate that the component $CID$ is subsuming the component $r.CID$. Multiple such $Subsume$ messages may get broadcast from different robots in the graph in any particular round. 

All the robots listen to all such broadcasts in each round, and build a directed graph, $Subsume$, $S=(C,E)$, where $C$ is the set of component IDs, and edge $(CID_j,CID_k)$ indicates that $Subsume(CID_j,CID_k)$ message has been received. In this graph, each node may have at most one outgoing edge but may have multiple incoming edges. The $winner$ component ID corresponds to that node (in my connected component of $S$) that has the minimum $CID$ among the nodes with no incoming edges (if such a node exists). Otherwise, there must exist a cycle with no incoming edges in the $Subsume$ graph, and the lowest valued $CID$ node in the cycle is chosen as $winner$. The significance of the $winner$ is that its $CID$ subsumes all other $CID$s in its connected component of $S$; that is, all robots that are in the same connected component of $S$ overwrite their current $CID$ by $winner$ in their connected component of $S$. 

The robot with the minimum ID among those with $TID\_Grow = winner$ and $state=grow$ changes its $state$ to $collect$, $leader$ to 1, $TID\_Collect$ to $TID\_Grow$, and embarks on the $Collect$ phase. In the $Collect$ phase, the leader does an independent DFS traversal $Collect(TID\_Collect)$ of the connected component of settled nodes of $G$ which have settled robots which have newly changed their component ID $CID$ to be the same as its own. And all (unsettled) robots which have newly changed their $CID$ to that of the $winner$ leader, or have $CID=winner$ but are not the leader, also change their $state$, whether $grow$ or $collect$, to $subsumed$ and stop movement until they are collected. In this DFS traversal $Collect$, the leader node collects all unsettled robots with $state=subsumed$ and brings them back to its home node from where it began the $Collect$ DFS traversal, while the thus collected robots change their $state$ to $collect$ once they join the collection traversal. During the $Collect$ traversal, if in some step the component gets subsumed by some other component, the unsettled robots reset their $state$ to $subsumed$. If the $Collect$ DFS traversal completes successfully, the collected robots and the leader change $state$ to $grow$, set their $TID\_Grow$ to that of the leader, and resume DFS $Grow(TID\_Grow)$ after the leader resets its leader status.

Note that the DFS $Collect(TID\_Collect)$ is independent of the DFS $Grow(TID\_Grow)$, and thus an independent set of variables $parent$, $child$, need to be used for the two different types of DFSs. Further, when a new instance of a DFS $Grow$/DFS $Collect$, as identified by a new value of $TID\_Grow$/ $TID\_Collect$, is detected at a settled robot (node), the robot switches to the new instance and resets the old values of $parent$ and $child$ for that DFS search.

In the DFS $Collect$ phase, the leader visits all nodes in its connected component of settled nodes having a settled robot that changed its component ID $r.CID\gets CID$. (These are the settled robots where $CID=r.CID\neq r.CID\_old$, where $CID\_old$ is the value of $CID$ before the latest overwrite by $winner$.)
This excludes the nodes already visited in the DFS $Grow$ phase having settled robots with the same $CID$ as that of the leader before it become the leader.
To confine the DFS $Collect$ to such nodes, note that the leader may have to backtrack from a node $v$ if the node (i) is free or (ii) has $CID=r.CID=r.CID\_old$ or (iii) has $CID\neq r.CID$. If the $CID$ of the leader changes at the beginning of this round (because it gets subsumed), before 
it can backtrack, the leader (and any accompanying robots having $state=collect$) simply changes $state$ to $subsume$ and stops. In cases (i) and (iii), there may thus be stopped robots at a free node, or at a node that belongs to an adjoining, independent component. Such robots may be later collected by (a) a leader from its old component, or (perhaps earlier than that) (b) by a leader from the component where they stop. In the former case (a), it is execution as usual. In the latter case (b), there is no issue of violating correctness even though the robots jump from one connected component sharing a common $CID$ to an adjacent one with a different $CID$. 

\subsection{Correctness and Complexity}
A robot may be in one of four states: $grow$, $collect$, $subsumed$, and $settled$. The state transition diagram for a robot is shown in Figure~\ref{fig:stdcid}(a).

\begin{figure*}[!t] 
\centering
\includegraphics[height=1.2in]{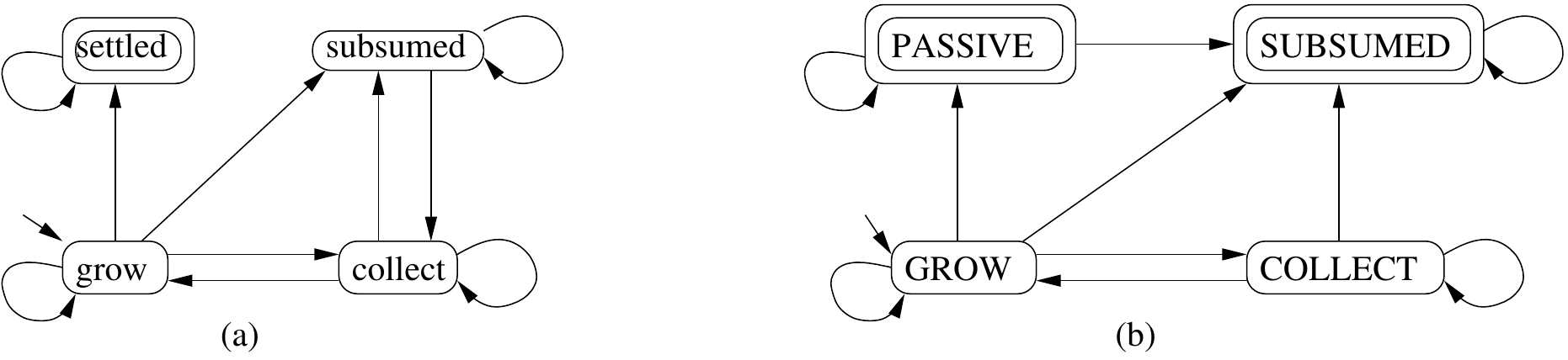}
\caption{State transition diagrams. (a) Diagram for a robot's state, $state$. (b) Diagram for any value of $CID$.}
\label{fig:stdcid}
\end{figure*}

\begin{lemma}\label{growsteps}
Once a robot enters $state=grow$ for some value of $TID\_Grow$, the DFS $Grow(TID\_Grow)$ completes within $\min(4m-2n+2,4k\Delta)$ rounds, or the robot moves out of that state within $\min(4m-2n+2,4k\Delta)$ rounds.
\end{lemma}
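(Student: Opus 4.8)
The plan is to reduce the analysis of DFS $Grow(TID\_Grow)$ to the clean single-source traversal $DFS(k)$ whose running time is already pinned down in Lemma~\ref{dfscorrect}. The starting observation is that, by the state-transition structure (Figure~\ref{fig:stdcid}(a)), a robot can leave $state=grow$ in exactly three ways: (i)~it settles at a free node, (ii)~it becomes the leader of a winning component and switches to $state=collect$, or (iii)~its component is touched by a $Subsume$ event and it switches to $state=subsumed$. Cases (ii) and (iii) happen only in a round in which a $Subsume$ message naming the robot's current $CID$ is processed in Subsume\_Graph\_Processing. Hence I would split the argument into the case where no such $Subsume$ event ever names the robot's component during this grow instance, and the case where one does.

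In the uninterrupted case I would argue that, as long as $TID\_Grow$ and $state=grow$ are unchanged, the moves dictated by the forward/backtrack rules of $Grow(TID\_Grow)$ coincide move-for-move with the traversal of $DFS(k)$: a settled robot of the same tree ($CID=r.CID$) marks a visited node and forces a backtrack exactly as in Lemma~\ref{dfscorrect}, a free node triggers a forward step that settles the next robot, and --- crucially --- a settled robot of a different tree ($CID\neq r.CID$) cannot be silently crossed, since encountering it is precisely what generates a $Subsume$ message and thus falls outside this case. Consequently every traversed edge has both endpoints among the (at most $k$) nodes carrying robots of this tree, so the walk is a genuine DFS confined to the subgraph $H$ induced by those nodes and their incident edges. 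Each edge of $H$ is traversed at most four times (at most once forward and once in backtrack/probe from each of its two endpoints), and $|E(H)|\le k\Delta$ since the degree sum of at most $k$ tree nodes is at most $k\Delta$; this bounds the walk by $4k\Delta$. The same walk is bounded by the global DFS bound $4m-2n+2$ of Lemma~\ref{dfscorrect}. Taking the minimum, $Grow(TID\_Grow)$ completes within $\min(4m-2n+2,\,4k\Delta)$ rounds.

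In the interrupted case, let $\rho$ be the first round in which a $Subsume$ message naming the robot's current $CID$ is processed. In round $\rho$ the robot updates $CID$ to $winner$ and, by the code, either becomes the unique $collect$ leader or sets $state\gets subsumed$; either way it has left $state=grow$. Since no $Subsume$ naming its component occurs before $\rho$, the moves in rounds $1,\dots,\rho-1$ are exactly those of the uninterrupted DFS of the previous paragraph, so $\rho$ cannot exceed the clean-DFS completion time $\min(4m-2n+2,\,4k\Delta)$. (An external component may also subsume this tree at an arbitrarily early round, which only makes the robot leave $grow$ sooner and hence respects the bound.) Combining the two cases gives the claim: within $\min(4m-2n+2,\,4k\Delta)$ rounds, DFS $Grow(TID\_Grow)$ either completes or the robot exits $state=grow$.

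The main obstacle I anticipate is the move-for-move equivalence of the second paragraph: one must verify that the bookkeeping of $parent$, $child$, and the visited/occupied status read off the settled robots makes $Grow(TID\_Grow)$ behave identically to the idealized $DFS(k)$ even though the graph simultaneously contains settled robots of foreign trees and moving, unsettled robots of other components. The real content is in checking that a foreign settled node is never traversed without first emitting a $Subsume$ (so it genuinely cannot extend the tree) and that the presence of other moving groups does not perturb this group's local forward/backtrack decisions, after which Lemma~\ref{dfscorrect} and the factor-$4$ incident-edge count together supply the stated time bound.
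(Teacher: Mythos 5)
Your proof is correct and follows essentially the same route as the paper's: bound the uninterrupted DFS by $\min(4m-2n+2,\,4k\Delta)$ via the standard count that each of the at most $k\Delta$ edges incident on settled nodes is traversed at most four times, and observe that a $Subsume$ event involving the component only moves the robot out of $state=grow$ earlier. The paper states this in three sentences without the explicit case split or the move-for-move reduction to $DFS(k)$, so your version is simply a more carefully justified rendering of the same argument.
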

\begin{proof}
A DFS (using $\log \Delta$ bits at a robot) completes within $4m-2n+2$ rounds. It also completes within $4k\Delta$ rounds, as in the DFS, each edge gets traversed a maximum of 4 times, and as at most $k$ nodes need to be visited before all $k$ robots get settled. Before this completion of the DFS, if the current component gets subsumed or subsumes another component, the robot moves to either $subsumed$ or $collect$ state.
\end{proof}

\begin{lemma}\label{collectsteps}
Once a robot enters $state=collect$ for some value of $TID\_Collect$, the DFS $Collect(TID\_Collect)$ completes in $\min(4m-2n+2,4k\Delta)$ rounds or the robot moves out of that state within $\min(4m-2n+2,4k\Delta)$ rounds.
\end{lemma}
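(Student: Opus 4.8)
The plan is to mirror the argument for Lemma~\ref{growsteps}, exploiting the fact that the $Collect$ phase is itself executed as a standard depth-first traversal (using its own dedicated $parent$/$child$ variables, independent of the GROW search), and then to account for the two phase-specific wrinkles: the confinement of the traversal to a sub-region of $G$, and the escape to $state=subsumed$. First I would show that \emph{if} the DFS $Collect(TID\_Collect)$ runs to completion it does so within $\min(4m-2n+2,4k\Delta)$ rounds, and then I would show that the only other way the robot leaves $state=collect$ is to be subsumed, which happens in a single round.

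For the completion bound I would argue the two quantities separately, exactly as in the GROW case. The DFS $Collect$ is confined to the connected component of settled nodes that have newly adopted the leader's component ID (the nodes with $CID=r.CID\neq r.CID\_old$): by inspection of the $state=collect$ branch of Algorithm~\ref{algo1g}, the leader backtracks immediately on reaching any node outside this region (a free node, a node with $CID=r.CID=r.CID\_old$, or a node with $CID\neq r.CID$). Hence every edge is crossed at most four times, as in an ordinary port-numbered DFS, giving the $4m-2n+2$ bound when charged against the whole graph. For the $4k\Delta$ bound I would count edge crossings locally: since there are at most $k$ robots there are at most $k$ settled nodes, hence at most $k$ nodes carrying the leader's $CID$; the traversal only ever crosses edges incident to these at-most-$k$ component nodes, of which there are at most $k\Delta$, each crossed at most four times. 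A completed DFS $Collect$ therefore finishes within $\min(4m-2n+2,4k\Delta)$ rounds.

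Finally I would handle the incomplete case. Reading the $state=collect$ branch, the robot's control leaves the $Collect$ traversal early in exactly one way: when the Subsume-graph processing of that round finds the robot's $CID$ to be a node of the Subsume graph, i.e.\ the component is subsumed by a colliding component. In that event the robot sets $state\gets subsumed$ and stops (the leader additionally relinquishing its $leader$ and $home$ roles), so it exits $state=collect$ in that very round, trivially within $\min(4m-2n+2,4k\Delta)$ rounds. (The complementary exit, where the leader reaches its $home$ node with all ports explored and flips to $grow$, is precisely the ``completes'' case already bounded above.)

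I expect the main obstacle to be making the $4k\Delta$ confinement bound precise: one must verify that the immediate backtracks from boundary nodes (free nodes, or nodes of a foreign $CID$) are charged to edges incident to the at-most-$k$ component nodes and do not inflate any edge's crossing count beyond four, nor trigger genuine exploration of an adjacent component. Once this charging is pinned down, the bound transfers verbatim from the GROW analysis.
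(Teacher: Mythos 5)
Your proposal is correct and follows essentially the same route as the paper's own proof: the $4m-2n+2$ bound from the standard DFS edge-crossing argument, the $4k\Delta$ bound by charging crossings to the at most $\Delta$ edges at each of the at most $k$ settled component nodes, and the observation that the only early exit is subsumption into $state=subsumed$. Your version is in fact more careful than the paper's, which does not spell out the charging of boundary-edge backtracks that you flag as the delicate point.
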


\begin{proof}
A DFS traversal of a component (using $\log \Delta$ bits at a robot) completes within $4m-2n+2$ rounds. It also completes within $4k\Delta$ rounds, as the collecting robot in the DFS traverses an edge at most 4 times, needs to visit each of the at most $\Delta$ neighbors of the at most $k$ settled nodes in the component, until collection completes and the leader is back at the home node. (At the completion of the DFS, the robot moves to $grow$ state; before this completion of the DFS, if the current component gets subsumed by another component, the robot moves to $subsumed$ state.)
\end{proof}

\begin{theorem}\label{th:disp}
The algorithm $Graph\_Disperse\_DFS$ solves \dis.
\end{theorem}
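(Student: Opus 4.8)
The plan is to establish the two properties that together constitute a solution to \dis: \textbf{validity} (at termination every robot occupies a distinct node) and \textbf{termination} (every robot eventually settles). I would treat these separately, leaning on the single-source DFS correctness of Lemma~\ref{dfscorrect} and the per-phase progress bounds of Lemmas~\ref{growsteps} and~\ref{collectsteps}.

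For validity, the key is a simple invariant maintained across every round: \emph{settled robots occupy pairwise distinct nodes, and once a robot is settled it never moves}. This holds at initialization (no robot is settled) and is preserved because a robot transitions to $state=settled$ only when its current node is \emph{free}, and because the settling rule admits at most one robot per node per round (the highest-ID robot from the highest-$CID$ grow group). Hence no two robots can ever settle on the same node, and since settled robots are immobile the distinctness is never destroyed. Consequently, if the algorithm terminates with all robots settled, the final configuration places exactly one robot on each of $k$ distinct nodes, which is \dis.

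For termination I would use two monotone quantities. First, the number of settled robots is non-decreasing (settled robots never unsettle). Second, the number of distinct active $CID$ values among unsettled robots is non-increasing and strictly decreases at every round in which a $Subsume$ message causes a merge: because global communication lets every robot reconstruct the same $Subsume$ graph $S$ and compute the same $winner$ per connected component (Figure~\ref{fig:stdcid}(b)), each merge collapses at least two $CID$s into one. Starting from at most $k$ components, there are therefore at most $k-1$ merge events over the whole execution. Between consecutive merges no component is disturbed, so each active $Grow(TID\_Grow)$ (respectively $Collect(TID\_Collect)$) runs to completion within the bounds of Lemma~\ref{growsteps} (respectively Lemma~\ref{collectsteps}). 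A $Grow$ DFS that runs uninterrupted behaves exactly like the single-source traversal $DFS(\cdot)$ of Lemma~\ref{dfscorrect} launched from the single home node of its group, so it settles every unsettled robot of that group on a distinct free node. Since merges can occur only finitely often, after the last merge the remaining unsettled robots form one component whose $Grow$ DFS completes without interruption and settles them all; hence the algorithm terminates with every robot settled.

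The main obstacle I expect is justifying the $Collect$ phase: namely that no unsettled robot is ever permanently lost and that, after collection, the gathered robots resume $Grow$ in a configuration consistent with a single-source DFS. Concretely, when a component is subsumed the non-winning unsettled robots enter $state=subsumed$ and halt, and the $winner$'s leader must gather exactly these robots via $Collect(TID\_Collect)$ and return them to its home node (the nodes with $CID=r.CID\neq r.CID\_old$), while correctly backtracking from free nodes, from nodes with $CID=r.CID=r.CID\_old$, and from nodes of a foreign component. The delicate cases are the robots stranded at a free node or inside an adjacent independent component after a leader backtracks or is itself subsumed mid-$Collect$; I would argue that each such robot is eventually picked up either by a leader of its original component or by a leader of the component at its stranding site, and that in the latter case no distinctness or settling invariant is violated because collection never settles a robot---it only relocates unsettled robots to a home node from which a unified $Grow$ resumes. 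Establishing that this gathering is exhaustive (every subsumed robot is reachable by its leader's $Collect$ traversal) and conflict-free under concurrent subsumptions is the heart of the argument; the monotone $CID$ count then guarantees the interleaving of $Grow$ and $Collect$ phases cannot continue indefinitely.
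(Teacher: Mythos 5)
Your proposal is correct and follows essentially the same route as the paper: validity from the one-settler-per-free-node rule, and termination by bounding the number of subsumption events by $k-1$ and the sojourn in each state via Lemmas~\ref{growsteps} and~\ref{collectsteps} (your global count of distinct $CID$ values is just the paper's per-robot count of entries into the $subsumed$ state, viewed globally). The one step you explicitly defer---that every subsumed or stranded robot is eventually collected---is precisely the step the paper also leaves unargued, asserting it as the $\min(4m-2n+2,4k\Delta)\cdot k$ bound on the $subsumed$-state sojourn in observation (1) of its proof.
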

\begin{proof}
Each robot begins in $state=grow$. We make the following observations about the state transition diagram of a robot.
\begin{enumerate}
    \item A robot can enter $subsumed$ state at most $k-1$ times. In $subsumed$ state, a robot can stay at most $\min(4m-2n+2, 4k\Delta)\cdot k$ rounds before it changes $state$ to $collect$.
    \item From $collect$ state, within $\min(4m-2n+2,4k\Delta)$ rounds, a robot can go to $subsumed$ state (which can happen at most $k-1$ times), or go to $grow$ state (Lemma~\ref{collectsteps}). 
    \item In $grow$ state, a robot can remain for at most $\min(4m-2n+2,4k\Delta)$ rounds, by when it may go to either $subsumed$ or $collect$ state for at most $k-1$ times, or go to $settled$ state 
    (Lemma~\ref{growsteps}). 
\end{enumerate}
It then follows that within a finite, bounded number of rounds, a robot will be in $grow$ state for the last time and within $\min(4m-2n+2,4k\Delta)$ further rounds, settle and go to $settled$ state. This is and will be the only robot in $settled$ state at the node. Thus, \dis\ is achieved within a finite, bounded number of rounds. 
\end{proof}

We model the state of a particular value of $CID$. It can be either GROW, COLLECT, PASSIVE, or SUBSUMED. The state transition diagram for the state of a $CID$ value is shown in Figure~\ref{fig:stdcid}(b).

\begin{theorem}\label{th:term}
The algorithm $Graph\_Disperse\_DFS$ terminates in $\min(2\cdot 4m,2\cdot 4k\Delta)$ rounds.
\end{theorem}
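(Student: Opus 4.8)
The plan is to charge the entire running time to a single surviving component and split its activity into GROW work and COLLECT work, bounding each by one DFS traversal of the graph. First I would identify the component that finishes last. Call a $CID$ value \emph{surviving} if it never enters the SUBSUMED state of the $CID$ transition diagram (Figure~\ref{fig:stdcid}(b)); such values are exactly the roots of the subsume forest induced by the $winner$ relation, and distinct surviving components settle disjoint sets of nodes that together cover all $k$ settled nodes. Since \dis\ is solved (Theorem~\ref{th:disp}), the algorithm halts precisely when the last surviving component settles its final robot. Let $c$ be the $CID$ of that last-finishing surviving component. Note $c$ is present from round $1$ (a winner retains its own $CID$, so a surviving $CID$ is one of the initial component ids and is never altered), and the termination round $T$ equals the number of rounds $c$ is active.

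Second, I would argue that $c$ is active in every round from the start until it finishes, and in each such round it is in exactly one of the states $grow$ or $collect$: the $Subsume\_Graph\_Processing$ and the $grow\leftrightarrow collect$ transitions consume no extra rounds, $c$ is never $subsumed$ since it survives, and it reaches PASSIVE only at the very end. Hence $T = G + C$, where $G$ and $C$ are the total numbers of rounds $c$ spends in GROW and COLLECT, respectively.

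Third, and this is the crux, I would show $G \le \min(4m-2n+2,\,4k\Delta)$ and $C \le \min(4m-2n+2,\,4k\Delta)$ by proving that the successive GROW episodes of $c$ concatenate into a \emph{single} DFS traversal, and likewise for its COLLECT episodes. For GROW: whenever $c$ interrupts $Grow(c)$ to run a $Collect$, the $parent$ and $child$ variables of the settled nodes of the $Grow(c)$ tree are preserved, and after collection the leader resumes from its $home$ node exactly where the traversal was suspended; moreover a node subsumed into $c$ is already $settled$, so $c$'s GROW treats it as visited and merely backtracks from it. Thus no edge is ever GROW-traversed twice by $c$, and the counting of Lemma~\ref{growsteps} (each edge at most $4$ times; at most $k$ settleable nodes of degree $\le\Delta$) applies to the aggregate, giving $G\le\min(4m-2n+2,4k\Delta)$. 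For COLLECT: each $Collect$ episode is confined to the settled nodes that \emph{newly} changed their component id (those with $CID=r.CID\neq r.CID\_old$); once such a node is collected, its $CID\_old$ is advanced to $CID$, so it is never eligible for collection again. Hence the node sets traversed by distinct COLLECT episodes are disjoint, their edge traversals sum over all episodes to at most one DFS bound, and the counting of Lemma~\ref{collectsteps} yields $C\le\min(4m-2n+2,4k\Delta)$.

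Finally, combining the two bounds gives $T=G+C\le 2\min(4m-2n+2,4k\Delta)=\min(8m-4n+4,\,8k\Delta)\le\min(2\cdot 4m,\,2\cdot 4k\Delta)$, as claimed. The main obstacle is the third step: rigorously establishing the no-redundancy property, namely that the interleaving of collection phases and subsumptions neither resets the GROW DFS nor forces re-collection of already-collected territory. This is exactly what global communication buys—the subsume graph $S$ is assembled in a single round, so the correct $winner$ is fixed immediately, avoiding the repeated re-traversals that cost the local-model algorithm its extra $\Theta(\log k)$ factor.
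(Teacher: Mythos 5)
Your proposal is correct and follows essentially the same route as the paper's proof: both single out the never-subsumed $CID$ of a last-settling robot, split its lifetime into GROW and COLLECT rounds, bound GROW by one aggregate DFS (since its data structures are never overwritten), and bound COLLECT by the disjointness of the successively collected components. The only place the paper is more careful is the $4m$ bound for COLLECT, where it classifies the traversed edges into four types ($e^{int}$, $e^{back}$, $e^{free}$, $e^{adj}$) and tracks how a free/adjacent boundary edge of one collection can reappear (at most once) as a back edge of a later one--a detail your phrase ``sum over all episodes to at most one DFS bound'' elides, since edges crossing between components collected at different times are not covered by node-set disjointness alone.
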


\onlyLong{
\begin{proof}
By Theorem~\ref{th:disp}, each robot settles within a finite, bounded number of rounds. We now determine the number of rounds more precisely.

Let $CID_x$ denote the $CID$ of any robot (one of possibly several) that settles in the last round of Algorithm $Graph\_Disperse\_DFS$. This $CID_x$ has never been subsumed and therefore its state has shuttled between GROW and COLLECT before reaching and ending in PASSIVE. We separately bound the number of rounds spent by $CID_x$ in GROW state and in COLLECT state.

Let the DFS tree in GROW state be associated with $TID\_Grow_x$. Observe that multiple sojourns of $CID_x$ in GROW state are associated with the same $TID\_Grow_x$. The DFS data structures associated with $TID\_Grow_x$ are never overwritten by another DFS in GROW state as the component $CID_x$ is never subsumed (and independent DFS traversal data structures are maintained for the GROW and COLLECT phases). 
Within $4m-2n+2$ rounds, possibly spread across multiple sojourns in GROW state, the DFS associated with $TID\_Grow_x$ completes and every robot associated with it gets settled. Every robot associated with $TID\_Grow_x$ also gets settled within $4k\Delta$ rounds, as the DFS visits each edge at most 4 times, and hence within $4k\Delta$ rounds, at least $k$ nodes get visited.

$CID_x$ can transit from GROW to COLLECT and back at most $k-1$ times because that is the maximum number of times $CID_x$ can subsume another $CID$. Let the transition to COLLECT state occur $l, 0\leq l\leq k-1$ times, let the number of rounds spent in COLLECT state on the $j$th transition to it, $1\leq j \leq l$, be $r_j$. Each transition to COLLECT is followed by a successful DFS traversal of the component $C_j$ of nodes having $CID=CID_x\neq CID\_old$.
There are four types of edges traversed in the DFS traversal of $C_j$.
\begin{enumerate}
    \item $e^{int}_j$: Edge between two nodes in $C_j$. 
    \item $e^{back}_j$: Edge from a node in $C_j$ to a node having $CID=CID_x=CID\_old$.
    \item $e^{free}_j$: Edge from a node in $C_j$ to a free node.
    \item $e^{adj}_j$: Edge from a node in $C_j$ to a node in an adjacent component (having $CID_y\neq CID_x$).
\end{enumerate}
Let the sets of these four types of edges be denoted $E^{int}_j$, $E^{back}_j$, $E^{free}_j$, $E^{adj}_j$.
We observe the following constraints from the algorithm.
\begin{enumerate}
\item Edge $e^{int}_j$ will never be incident on any node in any component $C_{j'}$, for $j'>j$. 
\item Similarly, edge $e^{back}_j$ will never be incident on any node in any component $C_{j'}$, for $j'>j$. 
\item Edge $e^{free}_j = (u,v)$ may at most once become an edge $e^{back}_{j'} = (v,u)$, for $j'>j$, if the free node $v$ gets settled and the component it is in gets subsumed by $CID_x$. 
\item Similarly, edge $e^{adj}_j = (u,v)$ may at most once become an edge $e^{back}_{j'} = (v,u)$, for $j'>j$, if the component that node $v$ is in gets subsumed by $CID_x$.
\end{enumerate}
In the DFS of $C_j$, each $e^{int}_j$ is traversed at most 4 times, whereas each $e^{back}_j$, $e^{free}_j$, and $e^{adj}_j$ is traversed at most 2 times (once in the forward mode and once in the backtrack mode). This gives:
\[r_j \leq 4|E^{int}_j| + 2(|E^{back}_j| + |E^{free}_j| + |E^{adj}_j|)\]
From the above constraints, we have:
\[\sum_{j=1}^l r_j \leq 4m\]

The DFS $Collect(TID\_Collect=CID_x)$ on the $j$th transition to COLLECT state contains a DFS traversal of the component $C_j$ of settled nodes having $CID=CID_x\neq CID\_old$; denote by $N^{int}_j$, the set of such nodes. For each $j, 1\leq j \leq l$, the set of such settled nodes is disjoint. In the DFS traversal of $C_j$, (the leader from) each node in $N^{int}_j$ visits each adjacent edge at most 4 times. As the number of such settled nodes across all $j$ is at most $k$, it follows that at most $4k\Delta$ edges are visited in DFS $Collect(TID\_Collect=CID_x)$ across all transitions to COLLECT state. Hence, \[\sum_{j=1}^l r_j \leq 4k\Delta\]

The theorem follows by separately combining the number of rounds in GROW and COLLECT phases in terms of $m$, and separately combining the number of rounds in GROW and COLLECT phases in terms of $k\Delta$.
\end{proof}
}

\begin{theorem}
Algorithm~\ref{algo1g} ({\em Graph\_Disperse\_DFS}) requires $O(\log(\max(k,\Delta)))$ bits memory.
\label{th:dfsspace}
\end{theorem}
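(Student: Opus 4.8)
The plan is to account for every variable each robot stores and show that none of them exceeds $O(\log(\max(k,\Delta)))$ bits. I would begin by partitioning the variables into two groups according to their type. The first group consists of all variables declared to be ``of type robot identifier'': these are $TID\_Grow$, $TID\_Collect$, $CID$, $CID\_old$, $winner$, and $home$. Since every robot ID lies in $[1,k]$, each such variable ranges over $\{1,\ldots,k\}$ together with the special symbol $\perp$ (and $\top$ where applicable), so each requires $\lceil \log k\rceil + O(1) = O(\log k)$ bits. The robot's own ID likewise needs $O(\log k)$ bits.

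Next I would handle the DFS bookkeeping. As noted in the algorithm description, the GROW and COLLECT phases each maintain their own independent copy of the standard DFS variables $parent$ and $child$; by the analysis underlying Lemma~\ref{dfscorrect}, each of $parent$ and $child$ holds a port number in $[1,\Delta]$ and hence uses $O(\log\Delta)$ bits. Having two independent copies only doubles this, so the DFS state still costs $O(\log\Delta)$ bits. The remaining flags, $leader$ (boolean) and $state$ (one of four values $\{grow,collect,subsumed,settled\}$), each fit in $O(1)$ bits.

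Summing the two groups, the total is $O(\log k) + O(\log\Delta) + O(1) = O(\log(\max(k,\Delta)))$ bits, which is the claimed bound. The one step that requires a little care, and which I regard as the main thing to verify rather than the main obstacle, is the \emph{Subsume\_Graph\_Processing} subroutine: each robot receives $Subsume$ messages and builds the directed graph $S=(C,E)$ in order to compute $winner$. I would argue that this graph is constructed and consumed within a single round and need not be stored persistently in robot memory across rounds, so it does not contribute to the steady-state space bound; only the derived value $winner$ (one robot identifier, hence $O(\log k)$ bits) is retained. This transient-versus-persistent distinction is the sole subtlety, and once it is settled the bound follows by straightforward addition over the finitely many declared variables.
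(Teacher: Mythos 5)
Your proof is correct and follows essentially the same route as the paper's: enumerate the variables, bound each robot-identifier-typed one by $O(\log k)$, the per-phase DFS port variables by $O(\log \Delta)$, and the flags by $O(1)$ (the paper also explicitly folds in $treelabel$ and $settled$ via Lemma~\ref{dfscorrect}). Your additional remark that the Subsume graph $S$ is transient and only $winner$ is retained is a point the paper's proof silently omits, and it is a reasonable reading of the algorithm rather than a divergence in approach.
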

\begin{proof}
Each set of $parent$, $child$, $treelabel$, $settled$ for the GROW and COLLECT phases takes $O(\log(\max(k,\Delta)))$ bits (follows from Theorem~\ref{dfscorrect}).
$CID$, $CID\_old$, $winner$, $TID\_Grow$, $TID\_Collect$, and $home$ take $O(\log\,k)$ bits each. $leader$ and $state$ take $O(1)$ bits each.
Thus, the theorem follows.
\end{proof}

{\bf Proof of Theorem~\ref{theorem:0}:} Follows from 
Theorems~\ref{th:disp} 
-- \ref{th:dfsspace}.

\section{Algorithm for Arbitrary Graphs (Theorem \ref{theorem:1})}
\label{section:algorithm1}

We present and analyze  {\em Graph\_Disperse\_BFS}, a BFS-based algorithm that solves {\dis} of $k\leq n$ robots on an arbitrary $n$-node graph in $O(\max(D,k)\Delta(D+\Delta))$ time with $O(\max(D,\Delta\log k))$ bits of memory at each robot in the global communication model. 
This algorithm improves the $O(\Delta^D)$ time of the best previously known algorithm \cite{Kshemkalyani} for arbitrary graphs (Table \ref{table:comparision}) in the local communication model.

We first discuss the rooted graph case, wherein all robots are on a single node initially. Here, we show that dispersion takes  $O(D\Delta(D+\Delta))$ time with $O(\max(D,\Delta\log k))$ bits of memory at each robot in the local communication model. We then extend the result to the general case of robots on multiple nodes.

\subsection{Rooted Case}
In the initial configuration, all $k\leq n$ robots are at a single node $v_{root}$. 
The synchronous algorithm proceeds in rounds and is described in Algorithm~\ref{algo:algo2bfs}. 
The algorithm induces a breadth-first search (BFS) tree, level by level, in the graph. 
There are two main steps in the algorithm when extending the tree from $i$ levels to $i+1$ levels: (i) the leaf nodes in the BFS tree at level $i$ determine the number of edges going to level $i+1$. This is done in procedure {\tt DetermineLeafDemand(i)} and can be achieved in $O(\Delta^2)$ rounds as a 2-neighborhood traversal is performed. The level $i$ robot sets its demand for robots equal to the number of edges (ports) going to level $i+1$. (ii) The leaf nodes at level $i$ then populate the level $i+1$ nodes in a coordinated manner, because there may be arbitrary number of edges and connectivity going from level $i$ to level $i+1$. This is done in procedure {\tt PopulateNextLevel(i)} iteratively by borrowing robots for exploration from $v_{root}$. The number of robots assigned by the root may be up to the demand of that node. This movement takes $O(D)$ time. As there may be edges from multiple nodes at level $i$ to a node at level $i+1$, only one robot can be earmarked to settle at that node; others retrace their steps back to the root. The robot earmarked to settle at the level $i+1$ node does a 1-neighborhood traversal and {\em invalidates} the ports of level $i$ nodes leading to that level $i+1$ node ($O(\Delta)$ time). The robot does not actually settle at the level $i+1$ node but participates in further computation. It then returns to the level $i$ node it arrived from and designates the port used to go to the level $i+1$ node as a {\em valid} port. The settled robots at level $i$ then re-evaluate the demand for robots, based on the number of {\em unfinalized} ports (i.e., not validated and not invalidated ports going to level $i+1$ nodes). All unsettled robots (including those that had been ``earmarked'' to settle at a level $i+1$ node) return to $v_{root}$ and they are reassigned for the next iteration based on the renewed (and decreased) valued of net demand for exploratory robots. This movement takes $O(D)$ time. The algorithm guarantees that $\Delta$ iterations suffice for determining the {\em valid/invalid} status of all ports of level $i$ nodes leading to level $i+1$ nodes, after which a final iteration reassigns the final demand based on the number of {\em valid} ports (each of which leads to a unique level $i+1$ node) and distributes up to those many robots among level $i+1$ nodes. The procedure {\tt PopulateNextLevel(i)} thus takes $O(\Delta(\Delta + D))$ time. 

Due to the BFS nature of the tree growth, $D$ iterations of the outer loop of {\em Graph\_Disp\-erse\_BFS} suffice. Hence, the running time is $O(D(\Delta^2+\Delta(\Delta+D))$. 

The following variables are used at each robot. (i) $nrobots$ for the total number of robots at the root, $v_{root}$. Initialize as defined. (ii) $level$ for the level of a robot/node in the BFS tree. Initialize to 0. (iii) $i$ for the current maximum level of settled robots. Initialize to 0. (iv) $demand[1\ldots\Delta]$, where $demand[j]$ for a non-leaf node is the demand for robots to populate level $i+1$ for sub-tree reachable via port $j$. Initialize to $\overline{0}$. $demand[j]$ for a leaf node at level $i$ has the following semantics. $demand[j] = 0/1/2/3$ means the node reachable via port $j$ does not go to level $i+1$ node/goes to a level $i+1$ node via an unfinalized edge/goes to a validated level $i+1$ node/goes to an invalidated level $i+1$ node. (v) $child\_id[1\ldots\Delta]$, where $child\_id[j]$ is the ID of the child node (if any,) reachable via port $j$. Initialize to $\overline{\perp}$. (vi) $parent\_id$ for the ID of the parent robot in the BFS tree. Initialize to $\perp$. (vii) $parent$ to identify the port through which the parent node in the BFS tree is reached. Initialize to $\perp$. (viii) $winner$ to uniquely select a robot among those that arrive at a level $i+1$ node. Initialize to 0.

The variables $child\_id[1\ldots\Delta]$ and $parent\_id$, and the unique robot identifiers assumption, are strictly not necessary for the single-rooted case. Without $child\_id[1\ldots\Delta]$ and $parent\_id$, the broadcast function can be simulated by each settled robot moving up to its parent and back, to communicate the demand of its subtree. The unique robot identifiers assumption and $winner$ help in determining which robot should settle at the root, for assigning robots as per the demands, and for selecting $winner$. Without these, a simple randomized scheme can be used for the above determinations.

\begin{lemma}
The {\bf while} loop of {\tt PopulateNextLevel(i)} (line (11) in Algorithm~\ref{algo:algo2bfs}) terminates within $\Delta$ iterations.
\label{lemma:bfsrootedbound}
\end{lemma}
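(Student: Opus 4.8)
The plan is to bound the loop by a monotone potential argument keyed to the degree bound. Let the potential be the maximum, over all level-$i$ leaf nodes, of the number of \emph{unfinalized} ports leading to level $i+1$ (the ports $j$ with $demand[j]=1$). Since $G$ is simple and every level-$i$ node has degree at most $\Delta$, each such node has at most $\Delta$ ports to level $i+1$, so this potential is at most $\Delta$ when \texttt{PopulateNextLevel(i)} begins. I would then show that (i) a finalized port never reverts to unfinalized, and (ii) each pass of the \textbf{while} loop decreases the unfinalized-port count of every still-demanding node by at least one; the $\Delta$ bound on the iteration count follows at once.

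First I would establish monotonicity: once $demand[j]$ is set to $2$ (validated) or $3$ (invalidated) it is never reset to $1$ during \texttt{PopulateNextLevel(i)}. In the rooted case this is direct, because a validated port is permanently associated with the unique level $i+1$ node chosen through it, and an invalidated port is one already overruled by some winner; neither designation is re-opened by a later iteration, so progress is irreversible.

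Next I would establish per-iteration progress. In an iteration the root dispatches exploratory robots along the currently unfinalized ports of each demanding node. A robot that reaches a level $i+1$ node $w$ is either chosen as the unique $winner$ at $w$ — in which case, on returning to its level-$i$ node, it marks its arrival port \emph{valid} ($demand[j]\gets 2$) — or it is not chosen, in which case the winner's $1$-neighborhood traversal invalidates that port ($demand[j]\gets 3$). Either way the explored port becomes finalized, so as long as at least one unfinalized port of a node is explored, that node's unfinalized count drops by at least one. Combining this with monotonicity, the potential falls by at least one per iteration, from at most $\Delta$ to $0$, after which the net demand equals the number of \emph{valid} ports, the loop-guard fails, and the loop terminates within $\Delta$ iterations.

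The main obstacle is the progress step, where I must preclude a livelock in which several parents contend for the same child $w$ and no port is finalized. The crux is that the $winner$ at each contested $w$ is selected deterministically and uniquely (by robot ID via $winner$), and its $1$-neighborhood invalidation reaches \emph{every} contending parent of $w$; hence no contested port can survive a round in which some robot actually reaches its child endpoint — it is either validated (the winner's) or invalidated (all others). The remaining detail to nail down is the dispatch guarantee: that in each iteration the root sends a robot to at least one unfinalized port of every still-demanding node, so that the per-node decrease of at least one is genuinely realized; this is where the accounting of borrowed-and-returned robots from $v_{root}$ must be checked.
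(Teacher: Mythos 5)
There is a genuine gap, and it sits exactly where you flagged ``the remaining detail to nail down'': the dispatch guarantee does not hold, and without it your potential argument collapses. The root distributes only $\min(nrobots,\sum_j demand[j])$ exploratory robots per iteration. When $nrobots$ is smaller than the total number of unfinalized ports across all level-$i$ leaves, some still-demanding nodes receive no robot at all in an iteration, so their unfinalized-port count does not drop and your potential (the \emph{maximum} over nodes of unfinalized ports) need not decrease. Moreover, your argument only ever invokes the first clause of the loop guard (``no unfinalized ports remain''), i.e., it tries to show that \emph{all} ports get finalized within $\Delta$ iterations --- which is simply false in the robot-scarce regime; in that regime the loop terminates only via the second clause of the guard (the number of \emph{validated} ports reaching $nrobots$), which your proof never uses.

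The paper's proof is a global counting argument keyed to that second clause. Define $e_i$ as the total number of unfinalized level-$i$-to-level-$(i{+}1)$ edges. If $nrobots\geq e_i$, every unfinalized edge is explored in one iteration and the first clause is falsified. If $nrobots<e_i$, the $nrobots$ dispatched robots travel along unfinalized edges and land on at least $\lceil nrobots/\Delta\rceil$ distinct level-$(i{+}1)$ nodes (each such node has at most $\Delta$ incoming edges), and each visited node validates exactly one incoming port; hence at least $\lceil nrobots/\Delta\rceil$ ports become validated per iteration, so within $\Delta$ iterations at least $nrobots$ ports are validated and the second clause $\bigl[\sum_{u}\sum_{j}(1 \mbox{ if } demand_u[j]=2)\bigr]<nrobots$ fails (or $e_i$ drops below $nrobots$ first, after which one more iteration suffices). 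Your monotonicity observation (finalized ports never revert) and your livelock analysis at contested children are correct and reusable, but to repair the proof you must replace the per-node max potential with this global count of validated ports measured against $nrobots$.
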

\onlyLong{
\begin{proof}
Let $e_i$ denote $\sum_{leaf\,u}\sum_{j=1}^{\Delta} (1\,if\,demand_u[j]=1)$, the number of unfinalized edges (i.e., not validated and not invalidated edges) going from level $i$ to level $i+1$. 
\begin{enumerate}
\item If $nrobots\geq e_i$, then every edge in $e_i$ can be explored and all nodes in level $i+1$ accounted for by validating exactly one edge each among the ports at level $i$ nodes (and invalidating all other unfinalized edges at level $i$ nodes) -- thus, the {\bf while} loop can be exited after one iteration as $demand_u[j]\neq 1$ for any $u$ at level $i$ and for any $j$ and the first clause of the {\bf while} loop condition is falsified.
\item
If $nrobots<e_i$, then $nrobots$ robots will be pressed into service for exploration of level $i+1$, at least $\lceil nrobots/\Delta\rceil$ nodes at level $i+1$ will be visited uniquely in this iteration (i.e., not visited in earlier iterations) via unfinalized edges, and hence at least $\lceil nrobots/\Delta\rceil$ ports (edges) at level $i$, that are currently marked as $demand_u[j]=1$ will be validated with change $demand_u[j]=2$. For the next iteration of the {\bf while} loop, $e_i$ will be decreased by at least this amount. It follows that within $\Delta$ iterations, (i) at least $nrobots$ ports at level $i$ will be validated ($demand_u[j]=2$), or (ii) $nrobots\geq e_i$. In case (i), the second clause of the {\bf while} loop condition is falsified and the loop is exited. In case (ii), by the reasoning given above in part (1), in one additional iteration, the loop is exited. 
\end{enumerate}
\end{proof}
}

\begin{lemma}
A BFS tree in induced in the underlying graph by Algorithm {\em Graph\_Disperse\_BFS} given in Algorithm~\ref{algo:algo2bfs}.
\label{lemma:bfsrootedbfs}
\end{lemma}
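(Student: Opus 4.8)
The plan is to prove the claim by induction on the level index $i$, maintaining the invariant that after the tree has been grown through level $i$, every settled node $w$ carries $level = \dist(v_{root},w)$, and the $parent$/$parent\_id$ pointers among the settled nodes form a tree rooted at $v_{root}$ in which every edge joins a level-$\ell$ node to a level-$(\ell+1)$ node. Establishing this invariant is exactly the assertion that the induced tree is a BFS tree: a tree whose node labels coincide with graph distances to the root and whose edges connect consecutive levels is, by definition, a BFS tree (and when fewer than $n$ nodes are settled due to the robot budget, the induced tree is a BFS tree on the settled set, i.e.\ a prefix in BFS order). The base case is immediate: before any iteration only $v_{root}$ is settled, with $level = 0 = \dist(v_{root},v_{root})$.

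For the inductive step, assume the invariant holds for $i$. First I would argue that the port classification is correct. In \texttt{DetermineLeafDemand(i)}, each leaf $u$ at level $i$ performs a $2$-neighborhood traversal and thereby learns, for each port $j$, whether the neighbor reached via $j$ is already settled (hence at distance $\le i$ by the induction hypothesis) or free. A free neighbor $w$ of a level-$i$ node satisfies $\dist(v_{root},w)\le i+1$ by adjacency, and $\dist(v_{root},w) > i$ since $w$ is not yet settled; hence $\dist(v_{root},w) = i+1$. Therefore the ports set to $demand_u[j]=1$ are precisely those leading to distance-$(i+1)$ nodes, and conversely every distance-$(i+1)$ node is reached by at least one such port, because any node at distance $i+1$ has a neighbor at distance $i$.

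Next I would show that \texttt{PopulateNextLevel(i)} settles each distance-$(i+1)$ node exactly once and attaches it to the tree through a single level-$i$ parent. By Lemma~\ref{lemma:bfsrootedbound} the {\bf while} loop terminates within $\Delta$ iterations, and at termination no unfinalized edge remains, i.e.\ $demand_u[j]\neq 1$ for every level-$i$ leaf $u$ and every port $j$: each such port is eventually either \emph{validated} ($demand_u[j]=2$) or \emph{invalidated} ($demand_u[j]=3$). The crux is the $1$-neighborhood traversal: when a robot is earmarked at a level-$(i+1)$ node $w$, it invalidates every level-$i$ port other than its arrival port that also leads to $w$, and validates exactly the arrival port. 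Consequently each distance-$(i+1)$ node contributes exactly one validated port, which becomes its unique tree edge, giving it $level = i+1$ and a single parent at level $i$, while all other edges into $w$ are invalidated and create no new tree node. This restores the invariant for $i+1$.

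The main obstacle I expect is this uniqueness argument: I must rule out a distance-$(i+1)$ node being settled twice (or acquiring two validated tree edges) across the up-to-$\Delta$ iterations of the {\bf while} loop, during which robots are repeatedly borrowed from $v_{root}$ and retrace their steps. The careful point is that once $w$ is claimed and its incident level-$i$ ports are finalized by the $1$-neighborhood traversal, no later iteration can reset any of those ports to $demand_u[j]=1$, so $w$ is never re-explored as a fresh level-$(i+1)$ node; together with the $winner$ selection that breaks ties among robots arriving at the same $w$ in the same round, this yields exactly one tree edge per new node. Granting this, the level labels equal the graph distances from $v_{root}$, so the induced tree is a BFS tree.
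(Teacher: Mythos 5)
Your proof is correct and follows essentially the same route as the paper's: induction on the level, with \texttt{DetermineLeafDemand} certifying that the unfinalized ports are exactly those leading to distance-$(i+1)$ nodes and Lemma~\ref{lemma:bfsrootedbound} guaranteeing that \texttt{PopulateNextLevel} finalizes every such port (splitting on whether the robot supply is exhausted). You actually spell out two points the paper only asserts — why a free neighbor of a level-$i$ node must be at distance exactly $i+1$, and why the validate/invalidate mechanism gives each new node a unique parent edge — which strengthens rather than changes the argument.
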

\onlyLong{
\begin{proof}
We show by induction on the hypothesis that ``all nodes at distance $i$ (along shortest path) from the root have a settled robot that is assigned $level=i$, or there are no more robots to assign to some such nodes.''
The hypothesis is clearly true for $level=0$ and can be seen to be true for $level=1$ by following the execution of the algorithm.

We now assume the hypothesis for $level=x$ and prove it true for $level=x+1$. Procedure {\tt DetermineLeafDemand(x)} correctly identifies all nodes at level $x+1$ and the number of unfinalized edges going to such nodes from level $x$ nodes is set to $e_i$ = $\sum_{u\,at\,level\,x}\sum_{j=1}^{\Delta} (1\,if\,demand_u[j]=1)$.
\begin{enumerate}
\item
If $nrobots\geq e_i$, then in one iteration of the {\bf while} loop of {\tt PopulateNextLevel(x)}, all nodes of level $x+1$ are assigned robots with $level=x+1$ as $\sum_{u\,at\,j}\sum_{j=1}^{\Delta} (1\,if\,demand[j]=1) = 0$. Each node at level $x+1$ has its corresponding outgoing port $j$ from level $x$ parent $u$ set to $demand_u[j]=2$, with possibly more robots left at $v_{root}$ for populating higher levels.
\item
If $nrobots<e_i$, then (as argued in the proof of Lemma~\ref{lemma:bfsrootedbound}),
it follows that within $\Delta$ iterations of the {\bf while} loop of {\tt PopulateNextLevel(x)}, (i) at least $nrobots$ ports at level $i$ will be validated ($demand_u[j]=2$), or (ii) $nrobots\geq e_i$. 
In case (i), the second clause of the {\bf while} loop condition is falsified and the loop is exited. All the remaining robots ($nrobots$) can be accommodated at level $x+1$ nodes and possibly some nodes at level $x+1$ will not be assigned any robots because the algorithm has run out of robots. There will be no further levels in the BFS tree. 
In case (ii), by the reasoning given above in part (1), in one additional iteration, the loop is exited. All the nodes at level $x+1$ will be assigned robots with $level=x+1$, with possibly more robots left at $v_{root}$ for populating higher levels. Each node at level $x+1$ has its corresponding outgoing port $j$ from level $x$ parent $u$ set to $demand_u[j]=2$.
\end{enumerate}
The correctness of the induced BFS tree follows.
\end{proof}
}

\begin{theorem}
Algorithm~\ref{algo:algo2bfs} ({\em Graph\_Disperse\_BFS}) solves \dis\ on single-rooted graphs in 
$O(D\Delta(\Delta + D))$ rounds and requires $O(\max(D,\Delta\log\,k))$ memory in the local communication model.
\label{th:bfsrootedtime}
\end{theorem}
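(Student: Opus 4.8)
The plan is to prove the three assertions of the theorem in turn: that \emph{Graph\_Disperse\_BFS} achieves dispersion on a single-rooted graph, that it does so in $O(D\Delta(\Delta+D))$ rounds, and that each robot uses $O(\max(D,\Delta\log k))$ bits. For correctness I would build directly on Lemma~\ref{lemma:bfsrootedbfs}, which certifies that the algorithm induces a genuine BFS tree: once the outer loop has processed level $i$, every node at shortest-path distance $i$ from $v_{root}$ carries a robot with $level=i$, unless the supply of robots has already been exhausted. What remains is to argue that robots land on \emph{distinct} nodes, one per node. The crucial mechanism is the $winner$ field inside {\tt PopulateNextLevel}: when several level-$i$ ports lead to the same level-$(i+1)$ node, exactly one arriving robot is selected as $winner$, which performs the $1$-neighborhood check to validate a single incoming port and invalidate the rest, so that node is claimed by precisely one robot. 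Since $G$ is connected, the BFS reaches all $n$ nodes, and because $k\le n$ every robot eventually settles at a unique node; hence \dis\ is solved.

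For the time bound I would charge one outer iteration (the extension of the tree from level $i$ to level $i+1$) and then multiply by the number of iterations. Procedure {\tt DetermineLeafDemand(i)} executes a $2$-neighborhood traversal, costing $O(\Delta^2)$ rounds. Procedure {\tt PopulateNextLevel(i)} runs its {\bf while} loop at most $\Delta$ times by Lemma~\ref{lemma:bfsrootedbound}; each iteration borrows robots from $v_{root}$ along an $O(D)$ round trip and performs an $O(\Delta)$ neighborhood check to finalize ports, so a single iteration costs $O(D+\Delta)$ and the whole procedure costs $O(\Delta(D+\Delta))$. Thus one outer iteration costs $O(\Delta^2+\Delta(D+\Delta))=O(\Delta(D+\Delta))$. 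Because the BFS depth is bounded by the diameter, the outer loop runs at most $D$ times, yielding the total $O(D\Delta(D+\Delta))$ claimed.

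For the memory bound I would simply tally the per-robot variables listed before the lemmas. The arrays $demand[1\ldots\Delta]$ and $child\_id[1\ldots\Delta]$ dominate: a leaf's demand can be as large as $k$ and a child-id is a robot identifier, so each array occupies $O(\Delta\log k)$ bits. The scalars $nrobots$, $parent\_id$, and $winner$ cost $O(\log k)$ each, and $parent$ costs $O(\log\Delta)$. The remaining $O(D)$ contribution comes from the level/depth counters together with the traversal state a robot must retain to shuttle between $v_{root}$ and a level-$i$ node and retrace its route. Taking the maximum of the two dominant contributions gives $O(\max(D,\Delta\log k))$ bits per robot.

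I expect the main obstacle to lie in the accounting inside {\tt PopulateNextLevel(i)} rather than in the final asymptotics. One must verify that \emph{each} constituent movement of an iteration --- dispatching borrowed robots to the level-$i$ leaves, the $winner$'s $1$-neighborhood port invalidation, the retrace back to $v_{root}$, and the subsequent demand re-evaluation --- genuinely finishes within $O(D+\Delta)$ rounds with no hidden synchronization stalls, and that the at-most-$\Delta$ iteration count of Lemma~\ref{lemma:bfsrootedbound} composes cleanly with these per-iteration costs. A subtler point, which is exactly what the correctness argument rests on, is confirming that the $winner$-based deduplication never leaves a level-$(i+1)$ node doubly occupied, nor leaves such a node empty while robots still remain at the root; this is precisely the distinctness invariant underlying Lemma~\ref{lemma:bfsrootedbfs}, and I would make sure its use here is airtight. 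The precise justification of the $O(D)$ memory term (as opposed to an $O(D\log\Delta)$ path encoding) also deserves explicit care.
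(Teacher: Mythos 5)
Your proposal is correct and follows essentially the same route as the paper's proof: correctness via Lemma~\ref{lemma:bfsrootedbfs}, the time bound by charging $O(\Delta^2)$ to {\tt DetermineLeafDemand} and $O(\Delta(D+\Delta))$ to the at-most-$\Delta$ iterations of {\tt PopulateNextLevel} (Lemma~\ref{lemma:bfsrootedbound}) summed over $D$ levels, and the memory bound by tallying the per-robot variables, with $demand$ and $child\_id$ dominating at $O(\Delta\log k)$ bits. The only point you leave implicit that the paper states is why the local communication model suffices, namely that each broadcast is simulated by a settled robot moving to its parent and back.
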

\begin{proof}
There is one robot settled at each node of the BFS tree induced (Lemma~\ref{lemma:bfsrootedbfs}); hence dispersion is achieved.

In one iteration of the main {\bf while} loop:
\begin{enumerate}
\item
{\tt DetermineLeafDemand(i)} does 2-neighborhood traversals in parallel, and hence takes $O(\Delta^2)$ rounds.
\item
In each of the $\Delta$ iterations of the {\bf while} loop of {\tt PopulateNextLevel(i)},
the upward movement and the downward movement of the robots in lines (12) and (17-18), respectively, takes $i$ rounds; and the code block (20-25) takes $2\Delta$ rounds. 
\end{enumerate}
So the time complexity is $\Delta^2 + \Delta(2\Delta+2i)$. By Lemma~\ref{lemma:bfsrootedbfs}, a BFS tree is induced and hence the maximum number of levels is $D$, which is the number of iterations of the {\bf while} loop of {\em Graph\_Disperse\_BFS}. 
Thus the overall time time complexity is $\sum_{i=1}^D 1(\Delta^2 + \Delta(2\Delta+2i))$ = $O(D\Delta(\Delta + D))$.

The variable $nrobots$ takes $\log\,k$ bits, $level$ and $i$ take $\log\,D$ bits each, $demand[1\ldots\Delta]$ takes $\Delta\,log\,k$ bits, $child\_id[1\ldots\Delta]$ takes $\Delta\,log\,k$ bits, $parent\_id$ takes $\log \,k$ bits, $parent$ takes $\log\Delta$ bits, and $winner$ takes 1 bit. 

Note that the local communication model suffices because the broadcast function can be simulated by each settled robot moving up to its parent and back, to communicate the demand of its subtree.  The theorem follows.
\end{proof}

\begin{algorithm*}[!h]
{\footnotesize
Initialize: $nrobots\gets$ number of robots; $level$, $i$, $demand[1\ldots\Delta]$, $child\_id[1\ldots\Delta]$; $parent\_id$, $parent$, $winner$\\
robot with lowest ID settles at root, $level\gets 0$\\
\While{$nrobots>0$}{
DetermineLeafDemand($i$)\\
PopulateNextlevel($i$)\\
$i\gets i+1$
}
\underline{DetermineLeafDemand($i$)}\\
Each settled robot $r$ at a leaf node $u$ at level $i$ does a 2-bounded DFS to count number of neighbors $v$ at level $i+1$. If on exploring $(u,v)$ via $out_u$, (i) $v$ is level $i-1$, then backtrack, (ii) else if $v$ has a level $i-1$ neighbor, then $v$ is level $i$ node - discount and backtrack, (iii) else $v$ is a level $i+1$ node, hence robot $r$ sets $demand_u[out_u]\gets 1$.\\
Wait until $\Delta^2$ rounds are elapsed.\\
\underline{PopulateNextLevel($i$)}\\
\While{$[\sum_{leaf \, u}\sum_{j=1}^{\Delta} (1 \mbox{ if } demand_u[j]=1)] > 0 \bigwedge [\sum_{leaf \, u}\sum_{j=1}^{\Delta} (1 \mbox{ if } demand_u[j]=2)] < nrobots$}{
All unassigned robots at level $i$ move upwards to root using $parent$ pointers in $i$ rounds\\
Leaf node $u$ broadcasts $B1(my\_id,parent\_id,\sum_{j=1}^{\Delta} (1 \, if \, demand_u[j]=1))$\\
On receiving $B1(x,my\_id,y)$, if $child\_id^{-1}[x]=\theta$ then $demand[\theta]\gets y$\\
On receiving $B1$ from all children ($\forall x \,|\,child\_id[x]\neq 0$), broadcast  $B1(my\_id,parent\_id,\sum_{j=1}^{\Delta} demand[j])$\\
Wait until $i$ rounds are elapsed; synchronize()\\
When root receives $B1$ from all children, distribute $\min(nrobots,\sum_{j=1}^{\Delta} demand[j])$ robots among children after resetting $winner\gets 0$ for each robot\\
Robots move down the tree to the leaf nodes at level $i+1$: On receiving $x$ robots, a node at level $<i$ ($=i$) distributes among children reachable via ports $p$ such that $demand[p]>0$ ($demand[p]=1$). \\
On arrival at level $i+1$ node $v$ from level $i$ node $u$ via $(out_u,in_v)$, robot with lowest ID sets $winner\gets 1$\\
\If{$winner=0$}{
retrace back via $in_v$ to $u$ and wait
}
\ElseIf{$winner=1$}{
visit each neighbor $w$ via $(out_v,in_w)$. If $w (\neq u)$ is at level $i$, $demand_w[in_w]\gets 3$\\
retrace back from $v$ using $in_v$ to $u$; $demand_u[out_u]\gets 2$
}
Wait until $2\Delta-1$ rounds are elapsed since arriving at level $i+1$ (so all robots at level $i+1$ are back at level $i$); synchronize()
}
All unassigned robots at level $i$ move upwards to root using $parent$ pointers in $i$ rounds\\
Leaf node $u$ broadcasts $B1(my\_id,parent\_id,\sum_{j=1}^{\Delta} (1 \, if \, demand_u[j]=2))$\\
On receiving $B1(x,my\_id,y)$, if $child\_id^{-1}[x]=\theta$ then $demand[\theta]\gets y$\\
On receiving $B1$ from all children ($\forall x \,|\,child\_id[x]\neq 0$), broadcast  $B1(my\_id,parent\_id,\sum_{j=1}^{\Delta} demand[j])$\\
Wait until $i$ rounds are elapsed.\\
When root receives $B1$ from all children, distribute $\min(nrobots,\sum_{j=1}^{\Delta} demand[j])$ robots among children; $nrobots\gets nrobots - \sum_{j=1}^{\Delta} demand[j]$\\
Robots move down the tree to the leaf nodes at level $i$: On receiving $x$ robots, a node at level $<i$ distributes among children reachable via ports $p$ such that $demand[p]>0$. \\
At a level $i$ node $u$, on receiving $\leq \sum_{j=1}^{\Delta} (1 \, if \,demand_u[j]=2)$ robots, send one robot $x$ on each port $out \,|\, demand_u[out]=2$; $child\_id_u[out]\gets x$\\
The robot $x$ reaches node $v$ at level $i+1$ via incoming port $in$, $level\gets i+1$, $parent\_id\gets u$, $parent\gets in$, $x$ settles at the node
}
\caption{Algorithm {\em Graph\_Disperse\_BFS} to solve \dis\ in global model. $r$ denotes a settled robot, if any, at that node.}
\label{algo:algo2bfs}
\end{algorithm*}
\setlength{\textfloatsep}{0pt}

\subsection{General Case}
We adapt the single-rooted algorithm to the multi-rooted case. From each root, a BFS tree is initiated in parallel, and is identified by the robot ID settled at the root. When two (or more) BFS trees meet at a node, a collision is detected; the tree with the higher depth (if unequal depths) subsumes the other tree(s) and collects the robots of the other tree(s) at its root. It then continues the BFS algorithm at the same depth in case  $nrobots =0$, i.e., level $i+1$ may not be fully populated yet.  A collision of two trees $T_x$ and $T_y$ is identified by a 4-tuple for each tree: $\langle root,depth,bordernode,borderport\rangle$. The changes to Algorithm~\ref{algo:algo2bfs} are given next, and the module for Collision processing is given in \onlyLong{Algorithm~\ref{algo:graphcollision}.} \onlyShort{the full version attached in Appendix.}

\begin{enumerate}
    \item After Line 5, insert a line: Invoke a call to Collision processing\onlyLong{~(Algorithm~\ref{algo:graphcollision})}.
    \item Line 6: Conditionally increment $i$ in line 6, as described in Step 3 of Collision processing\onlyLong{~(Algorithm~\ref{algo:graphcollision})}.
    \item Line 8: The case (iii) becomes case (iv) and the new case (iii) is: if $v$ has a settled robot $r'$ of another tree with root $root'$ and level $lvl'$, $r$ broadcasts Collide($\langle root,i+1,u,out_u\rangle,\langle root',lvl',v,in_v\rangle$) -- then discount $v$ and backtrack.
    \item Lines 19-24: are to be executed only with reference to robots belonging to my own tree (having same root).
    \item Line 34: execute if no other robot from any tree arrives at the node $v$. Otherwise execute line 35.
    \item Add new Line 35 in Algorithm~\ref{algo:algo2bfs}: For each other robot $r'$ of tree with root $root'$ and level $i+1$, broadcast 
    Collide($\langle root,i+1,u,out_u\rangle,\langle root',i+1,r',v_{in}\rangle$). Wait for SUBSUME messages broadcast in Collision processing\onlyLong{~(Algorithm~\ref{algo:graphcollision})}. If $x$'s tree subsumes other trees, $x$ sets $level\gets i+1$, $parent\_id\gets u$, $parent\gets in$, $x$ settles at $v$. (If $x$'s tree is subsumed, $x$ retraces its step back to $u$, then moves on to the root of the tree that subsumes its tree as described in Collision processing.)
\end{enumerate}

\onlyLong{
\begin{algorithm*}[!t]
{\normalsize
The root node $r_{min}$ with the lowest ID among those roots of trees that are involved in collisions does Collision processing. Using all Collide messages broadcast in this iteration of the {\bf while} loop of line 3, Algorithm~\ref{algo:algo2bfs}, $r_{min}$ creates a undirected Collison graph $G_C=(V_C,E_C)$. 
$V_C=\{T.root\,|\, Collide(T,*) \mbox{ or } Collide(*,T) \mbox{ message is received }\}$.
$E_C=\{(T.root,,T'.root)\,|\, \mbox{ at least one } Collide(T,T') \mbox{ message is received } \}$.
$r_{min}$ creates a Maximal Independent Set (MIS) from among those nodes $T.root$ of $V_C$ having $T.depth= i+1$. It then creates a partition $P=\{P_1,\ldots P_{|MIS|}\}$ of $V_C$ such that each $P_a$ has exactly one node, $central(P_a)$, of the MIS and a subset of $\overline{MIS}$ nodes that $central(P_a)$ covers. Such a partition is feasible because each Collide message has at least one parameter $T_x$ such that 
$T_x.depth=i+1$. $r_{min}$ then broadcasts 
one SUBSUME($T_x,T_y$) record corresponding to some received $Collide(T_x,T_y)$ message, for each $T_x.root$ such that  $T_x.root = central(P_a).root$ and each $T_y.root$ such that $T_y.root$ is covered by $T_x.root$ in $P_a$. In effect, this  selects, for each pair of BFS trees that collide and are assigned to the same partition $P_a$, one pair (among possibly multiple pairs) of border nodes in the two trees via which one tree (corresponding to $central(P_a)$) will subsume the other tree.  \\
For each partition $P_a$ of $G_C$, all robots in the trees of $G$ corresponding to $\overline{MIS}\cap P_a$ nodes of $G_C$ collect to $central(P_a).root$ root node of that tree of $G$  corresponding to the MIS member node $central(P_a)$ in its partition of $G_C$.
For record SUBSUME$(T_x,T_y)$, $T_y.bordernode$ identifies a path $H$ from $T_y.root$ to $T_y.bordernode$ by nodes along $H$ serially broadcasting B2($my\_id,parent\_id,parent$) progressively up the path from $T_y.bordernode$ to $T_y.root$ ($i+1$ serial broadcasts suffice).
All robots in $T_y$, except those robots along $H$, beginning from the leaf node robots move up tree $T_y$ to $T_y.root$ using the $parent$ pointers ($i+1$ rounds suffice). The robots in $T_y$ then move down path $H$ from $T_y.root$ to $T_y.bordernode$ ($i+1$ rounds suffice); then to $T_x.bordernode$ and then up to $T_x.root$ using the $parent$ pointers ($i+1$ rounds suffice). \\
$nrobots$ at $T_x.root$ is then updated with the count of the newly arrived robots, and another iteration of $Graph\_Disperse\_BFS$ is executed in the tree corresponding to $T_x$. This iteration is for the same value of $i$ if $nrobots =0$ before the update, i.e., $i$ is not incremented in line 6 of Algorithm~\ref{algo:algo2bfs} immediately following the Collision processing because level $i+1$ may not be fully populated as yet and to maintain the BFS property, level $i+1$ needs to be filled before filling level $i+2$.

}
\caption{Module for Collision processing for multi-rooted case of {\em Graph\_Disperse\_BFS} to solve \dis\ in global model. $r$ denotes a settled robot, if any, at that node.}
\label{algo:graphcollision}
\end{algorithm*}
}

\begin{theorem}
Algorithm~\ref{algo:algo2bfs} ({\em Graph\_Disperse\_BFS}) along with the modifications given in this section solves \dis\ in multi-rooted graphs in 
$O(\max(D,k)\Delta(\Delta + D))$ rounds and requires $O(\max(D,\Delta\log\,k))$ memory in the global communication model.
\label{th:bfsmultiroottime}
\end{theorem}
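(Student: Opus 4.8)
\emph{Plan.} The statement is the multi-rooted counterpart of Theorem~\ref{th:bfsrootedtime}, so the plan is to reduce to the single-rooted analysis and then account separately for the overhead introduced by collision resolution (Algorithm~\ref{algo:graphcollision}). First I would argue correctness. While a tree grows in isolation it executes exactly Algorithm~\ref{algo:algo2bfs}, so Lemmas~\ref{lemma:bfsrootedbound} and~\ref{lemma:bfsrootedbfs} guarantee it induces a valid BFS tree carrying one settled robot per visited node. The only new behaviour is the merging of colliding trees, so the heart of correctness is the invariant that, at the end of every pass of the outer \textbf{while} loop (a \emph{super-iteration}, synchronized across all active trees), the settled robots form a BFS forest in which level $\ell$ of every surviving tree is fully populated before level $\ell+1$ is entered, and no node carries two settled robots. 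The device sustaining this invariant is the MIS-partition that $r_{min}$ computes from the broadcast Collide messages: since every Collide message carries at least one tree of depth $i+1$ and the MIS is drawn from exactly the depth-$(i+1)$ roots, the partition $P=\{P_1,\dots,P_{|MIS|}\}$ covers all colliding roots, assigns each subsumed tree to a unique central (hence deeper-or-equal) tree, and, because MIS members are pairwise non-adjacent in $G_C$, executes all simultaneous subsumptions without conflict.

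Next I would check that a single subsumption preserves the forest structure. For a SUBSUME$(T_x,T_y)$ record, every robot of $T_y$ off the connecting path $H$ is routed up to $T_y.root$, down $H$ to the border, and across to $T_x.root$, whereupon $T_x$ resumes its BFS with enlarged $nrobots$; the re-execution of the \emph{same} level triggered when $nrobots=0$ ensures level $i+1$ of $T_x$ is completed before $i$ advances, so the global level-by-level order survives. Termination then follows, since each robot eventually settles in some surviving tree and $k\le n$ always leaves a free node for an unsettled robot.

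For the running time I would count super-iterations. Each costs the rooted per-level bound $O(\Delta(\Delta+D))$ from Theorem~\ref{th:bfsrootedtime} plus the collision cost, which is $O(i+1)=O(D)$ rounds for the serial path broadcasts and the up/down movements of Algorithm~\ref{algo:graphcollision}; as $D\le \Delta(\Delta+D)$ this is absorbed, leaving $O(\Delta(\Delta+D))$ per super-iteration. A surviving tree advances its level at most $D$ times, since BFS depth is bounded by the diameter, and repeats a level only when it subsumes another tree with $nrobots=0$. Because the number of trees is non-increasing and at most $k$ initially, any single tree performs at most $k-1$ subsumptions and hence at most $k-1$ repeats, so its timeline spans $O(D+k)=O(\max(D,k))$ super-iterations; as finished trees simply stop, the longest-surviving timeline bounds the whole synchronized execution. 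Multiplication yields $O(\max(D,k)\,\Delta(\Delta+D))$ rounds.

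The memory bound augments the count in Theorem~\ref{th:bfsrootedtime} by only the collision 4-tuple $\langle root,depth,bordernode,borderport\rangle$, which costs $O(\log k+\log D+\log\Delta)$ bits and is dominated by the existing $O(\max(D,\Delta\log k))$ term; exactly as in Theorem~\ref{th:dfsspace}, the Collide broadcasts, the graph $G_C$, the MIS, and the SUBSUME records are transient objects formed and consumed within the Communicate/Compute phases and charged to no persistent storage. Here the global model is genuinely needed, since it lets $r_{min}$ gather $G_C$ and resolve all collisions of a round in one shot. I expect the main obstacle to be precisely this collision invariant: proving that the MIS-partition never lets two surviving trees claim the same node and never breaks the level-by-level BFS order when trees of differing depths collide in the same round, and that vacating and re-growing over a subsumed tree's nodes creates no duplicate settlement.
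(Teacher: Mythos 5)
Your proposal is correct and follows essentially the same route as the paper's proof: reduce correctness to the rooted case plus the invariant that a subsuming tree re-executes the same level to keep the BFS property, and bound the time by charging each of the at most $k-1$ subsumption-induced extra iterations the per-level cost $O(\Delta(\Delta+D))$ on top of the $D$ baseline iterations. The paper's accounting is phrased as a base cost $O(D\Delta(\Delta+D))$ plus an additive $O(k\Delta(\Delta+D))$ for the $x<k$ subsumptions rather than as $O(\max(D,k))$ uniform super-iterations, but the two decompositions are equivalent.
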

\onlyLong{
\begin{proof}
Each concurrently-initiated BFS tree grows until it collides with another (or runs out of robots). When two (or more) trees collide and are assigned to the same partition of $G_C$, one tree subsumes the robots of the other(s) and by continuing the same logic for the same value of level $i$ (step 3 of Algorithm~\ref{algo:graphcollision}), the BFS-tree property is preserved until termination (exhausting all the unsettled robots at the root of the tree). On termination, one robot is settled at each distinct node of the BFS tree(s). Hence \dis\ is achieved.

Building on the proof of Theorem~\ref{th:bfsrootedtime}, in addition to $O(\sum_{i=1}^D \Delta^2 + i\Delta)$ $=O(\Delta D (\Delta +D))$ rounds, we have more iterations of the main {\bf while} loop. Let there be $x$ such iterations corresponding to $x$ serial subsumptions by the tree in question. Let the depth (level) of the tree for the $j$th subsumption be $d_j$. Then the following additional time cost is incurred:  $\sum_{j=1}^x \Delta^2$ (for lines 22-24 of {\tt PopulateNextLevel}) $+ 3d_j$ (for robot movements in step 2 of Algorithm~\ref{algo:graphcollision}) $+ 2d_j\Delta$ (for robot movements in {\tt PopulateNextLevel}). $d_j<D$ and $x<k$. Thus the additional time $\leq \Delta^2 k + 3Dk + 2kD\Delta$. The time complexity follows.

The logic introduced in the multi-rooted algorithm does not add variables that increase the bit complexity. 

The global communication model needs to be used for issuing and processing the broadcasts. The theorem follows.
\end{proof}
}

{\bf Proof of Theorem~\ref{theorem:1}:} Follows from Theorems~\ref{th:bfsrootedtime} and \ref{th:bfsmultiroottime}.

\section{Algorithm for Arbitrary Trees (Theorem \ref{theorem:2})}
\label{section:algorithm2}
We present and analyze a BFS-based algorithm that solves {\dis} of $k\leq n$ robots on an arbitrary $n$-node tree in $O(D\max(D,k))$ time with $O(\max(D,\Delta\log\,k))$ bits of memory at each robot in the global communication model. 
This algorithm improves the $O(n)$ time of the best previously known algorithm \cite{Augustine:2018} for arbitrary trees (Table \ref{table:comparision-tree}) in the local communication model.

In the rooted graph case, we first show that dispersion takes $O(D^2)$ time with $O(\max(D,\Delta\log\,k))$ bits of memory at each robot in the local communication model. We then extend the result to the general case of robots on multiple nodes. 

\onlyShort{\vspace{-3mm}}
\subsection{Rooted Case}
\label{rootedtree}
The rooted tree case, where all robots are initially colocated at one node, is a special case of Algorithm~\ref{algo:algo2bfs} adapted to the tree topology. {\tt DetermineLeafDemand(i)} behaves as follows: instead of lines 8-9, $demand_u[out_u]\gets 1$ for all $out_u$ other than $parent$. In {\tt PopulateNextLevel(i)}, the {\bf while} loop (lines 11-25) is removed, and robots do not move upwards to root (line 26). A single iteration, as per the following modification of lines (27-34), and excluding line (30), suffices.

\begin{description}
\item[27:] Leaf node $u$ broadcasts 
$B1(my\_id,parent\_id,\delta-1)$.
\item[28:] On receiving $B1(x,my\_id,y)$, if $child\_id^{-1}[x]=\theta$ then $demand[\theta]\gets y$.
\item[29:] On receiving $B1$ from all children ($\forall x \,|\,child\_id[x]\neq 0$), broadcast  $B1(my\_id,parent\_id,\-\sum_{j=1}^{\Delta} demand[j])$.
\item[31:] When root receives $B1$ from all children, distribute $\min(nrobots,\sum_{j=1}^{\Delta} demand[j])$ robots among children; 
$nrobots\gets nrobots - \sum_{j=1}^{\Delta} demand[j]$.
\item[32:] Robots move down the tree to the leaf nodes at level $i$: On receiving $x$ robots, a node at level $<i$ distributes among children reachable via ports $p$ such that $demand[p]>0$. 
\item[33:] At a level $i$ node $u$, on receiving $\leq \delta-1$  
robots, send one robot $x$ on each port $out \,|\, demand_u[out]=1$; $child\_id_u[out]\gets x$.
\item[34:] The robot $x$ reaches node $v$ at level $i+1$ via incoming port $in$, $level\gets i+1$, $parent\_id\gets u$, $parent\gets  in$, $x$ settles at the node.
\end{description}

\begin{theorem}
Algorithm~\ref{algo:algo2bfs} ({\em Graph\_Disperse\_BFS})  with the changes described in this section solves 
\dis\ on a single-rooted tree in $O(D^2)$ rounds and requires $O(\max(D,\Delta\log\,k))$ memory in the local communication model.
\label{th:bfsrootedtreetime}
\end{theorem}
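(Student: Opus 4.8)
The plan is to follow the template of the proof of Theorem~\ref{th:bfsrootedtime}, first establishing that the induced structure is a genuine BFS tree carrying exactly one robot per node, and then bounding the round count. The essential difference, and the source of the improved $O(D^2)$ bound, is that a tree has no non-tree edges: every edge incident to a level-$i$ node other than its parent edge leads to a distinct level-$(i+1)$ node, and every level-$(i+1)$ node has a unique parent at level $i$. I would make this the central structural fact of the argument, since it is exactly what licenses removing the 2-bounded traversal from {\tt DetermineLeafDemand} and the validation/invalidation {\bf while} loop (lines 11--25) from {\tt PopulateNextLevel}.

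For correctness, I would mirror the induction of Lemma~\ref{lemma:bfsrootedbfs} on the hypothesis that every node at distance $i$ from $v_{root}$ carries a settled robot with $level=i$ (or the robots are exhausted). The base cases $i=0,1$ are immediate from the initialization. For the inductive step, the simplified {\tt DetermineLeafDemand(i)} sets $demand_u[out_u]\gets 1$ for every non-parent port $out_u$ of each level-$i$ leaf $u$; by the no-cross-edge observation these ports are in bijection with the level-$(i+1)$ children, so the subtree demand aggregated and broadcast up to $v_{root}$ equals the true number of level-$(i+1)$ nodes. The single pass through the modified lines (27)--(34) then routes $\min(nrobots,\sum_j demand[j])$ robots down the tree, settling exactly one at each level-$(i+1)$ node, which preserves the hypothesis.

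For the time bound, I would charge each iteration $i$ of the outer {\bf while} loop separately. Since {\tt DetermineLeafDemand(i)} is now a purely local assignment with no neighborhood traversal, its cost is $O(1)$ rounds. The populate step consists of one upward demand-broadcast along a subtree of depth at most $i$ and one downward robot movement of the same depth, each simulated in the local model by robots walking to their parent and back; both take $O(i)$ rounds, and the final settling is $O(1)$. Hence iteration $i$ costs $O(i)$ rounds, and since a BFS tree within a tree has at most $D$ levels, the total is $\sum_{i=1}^{D} O(i) = O(D^2)$. The memory claim carries over from Theorem~\ref{th:bfsrootedtime}, as the tree version uses a subset of the same variables, yielding $O(\max(D,\Delta\log k))$ bits; and local communication suffices precisely because every broadcast is realized by the parent-walk simulation.

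The main obstacle I anticipate is rigorously justifying that the {\bf while} loop and the 2-neighborhood check are genuinely dispensable, i.e.\ formally proving the no-cross-edge property and verifying that none of the validation states ($demand\in\{2,3\}$) nor the retrace-and-reassign behaviors of Algorithm~\ref{algo:algo2bfs} can ever be triggered on a tree. Once that structural fact is pinned down, both the correctness induction and the per-level $O(i)$ accounting follow smoothly, and the remaining work is the routine summation and the bookkeeping of variable widths.
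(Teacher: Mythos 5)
Your proposal is correct and follows essentially the same route as the paper: the paper likewise treats the tree logic as a special case of Algorithm~\ref{algo:algo2bfs} (so the BFS-tree correctness carries over from Theorem~\ref{th:bfsrootedtime}), charges iteration $i$ of the outer loop $O(i)$ rounds for the up/down traversals, and sums $\sum_{i=1}^{D} i = O(D^2)$, with memory and the local-model simulation inherited unchanged. Your added detail --- explicitly re-running the induction and isolating the no-cross-edge property of trees as the reason the validation loop and 2-neighborhood check become dispensable --- is a more careful justification of what the paper asserts in one line, not a different argument.
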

\onlyLong{
\begin{proof}
As this logic is a special case of Algorithm~\ref{algo:algo2bfs}, a BFS tree of settled robots is created (Theorem~\ref{th:bfsrootedtime}) and \dis\ is solved. 

Based on the modifications above to Algorithm~\ref{algo:algo2bfs}, the main {\bf while} loop runs $i=1$ to $D$, whereas inside, the traversal of the robots down the tree from the root to the leaf nodes (line 32) takes $i$ rounds in iteration $i$. So the number of rounds is $\sum_{i=1}^D i = O(D^2)$.
The bit space complexity is same as for the rooted graph case, and the local communication model is also used as explained in the proof of Theorem~\ref{th:bfsrootedtime}.
The theorem follows.
\end{proof}
}

\onlyShort{\vspace{-2mm}}
\subsection{General Case}
\label{multirootedtree}
Each of the multiple roots on the tree topology initiate in parallel the execution of the rooted tree case, as described in Section~\ref{rootedtree}, with the following additional changes.
\begin{enumerate}
    \item After Line 5 of Algorithm~\ref{algo:algo2bfs}, insert a line: Invoke a call to Collision processing\onlyLong{~(Algorithm~\ref{algo:graphcollision})}.
    \item Line 6 of Algorithm~\ref{algo:algo2bfs}: Conditionally increment $i$ in line 6, as described in Step 3 of Collision processing\onlyLong{~(Algorithm~\ref{algo:graphcollision})}.
    \item Execute line 34 of Algorithm~\ref{algo:algo2bfs} only if no other robot from any other tree  arrives at the node $v$ nor is there a robot from another tree settled at $v$: ``The robot $x$ reaches node $v$ at level $i+1$ via incoming port $in$, $level\gets i+1$, $parent\_id\gets u$, $parent\gets  in$, $x$ settles at the node.'' Otherwise execute the following line 35.
    \item Add new line 35 in Algorithm~\ref{algo:algo2bfs}: For each other robot $r'$ of tree with root $root'$ and level $i+1$ that arrives at $v$, broadcast 
    Collide($\langle root,i+1,u,out_u\rangle,\langle root',i+1,r',v_{in}\rangle$). 
    If there is a robot $r''$ of tree with root $root''$ and level $i'' \leq i$ settled at $v$, broadcast
    Collide($\langle root,i+1,u,out_u\rangle,\langle root'',i'',r'',v_{in}\rangle$).
    Wait for SUBSUME messages broadcast in Collision processing\onlyLong{~(Algorithm~\ref{algo:graphcollision})}. 
    If $x$'s tree subsumes other trees and there is no robot $r''$ settled at $v$, $x$ sets $level\gets i+1$, $parent\_id\gets u$, $parent\gets in$, $x$ settles at $v$. 
    Else if $x$'s tree subsumes other trees and there is a robot $r''$ settled at $v$, $x$ retraces back to $u$ and up to its root, along with robots of other subsumed trees that relocate to $root$ (the root of the tree associated with $x$), and also resets $child\_id_u[out]$.
    (Else if $x$'s tree is subsumed, $x$ retraces its step back to $u$, then moves on to the root of the tree that subsumes its tree as described in Collision processing.)
\end{enumerate}

\begin{theorem}
Algorithm~\ref{algo:algo2bfs} ({\em Graph\_Disperse\_BFS}) along with changes described above and in Section~\ref{rootedtree} solves \dis\ on a multi-rooted tree in $O(D\max(D,k))$ rounds and requires $O(\max(D,\Delta\log\,k))$ memory in the global communication model.
\label{th:bfsmultiroottreetime}
\end{theorem}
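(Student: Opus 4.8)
The plan is to build directly on Theorem~\ref{th:bfsrootedtreetime}, which already establishes that a single-rooted BFS tree on a tree topology is correctly induced in $O(D^2)$ rounds in the local model, and then to account separately for the overhead introduced by concurrently-growing trees that collide and subsume one another. First I would argue correctness: each of the multiple roots initiates the rooted-tree procedure in parallel, growing a BFS tree level by level. When two or more trees meet at a node, the Collision-processing module (building the Collision graph $G_C$, forming an MIS, and issuing SUBSUME records) selects, for each colliding pair assigned to a common partition, a single winner whose root collects the robots of the subsumed trees. The invariant to maintain is the BFS property: every node at shortest-path distance $i$ from a surviving root carries a settled robot at $level=i$ (or robots have been exhausted). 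Since a subsuming tree continues at the \emph{same} value of $i$ whenever $nrobots=0$ before the update (Step~3 of Collision processing), level $i+1$ is fully populated before level $i+2$ is attempted, so the invariant is preserved across subsumptions. On termination, exactly one robot is settled at each distinct node, so \dis\ is achieved.

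For the runtime I would decompose the cost into a baseline BFS-growth term and a subsumption-overhead term. The baseline, inherited from Theorem~\ref{th:bfsrootedtreetime}, is $\sum_{i=1}^{D} i = O(D^2)$ rounds: in the tree case {\tt DetermineLeafDemand} is trivial, because every non-parent port leads to a fresh child at level $i+1$ (a tree has no cross or back edges), so no $2$-neighborhood traversal is needed, and the {\bf while} loop of {\tt PopulateNextLevel} is removed, leaving only the $i$-round downward movement at level $i$. For the overhead, let the surviving tree in question undergo $x$ serial subsumptions, with level $d_j < D$ at the $j$th subsumption. Each subsumption costs $O(d_j)$ rounds: collecting the robots of a subsumed tree $T_y$ requires moving them up to $T_y.root$, down the path $H$ to the border node, and then across to $T_x.root$ (each phase at most $i+1 \le D$ rounds), followed by one re-execution of the downward population step at the current level, again $O(d_j)$ rounds. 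Summing yields additional cost $\sum_{j=1}^{x} O(d_j) \le x \cdot D < kD$, since $x < k$ (each subsumption strictly reduces the number of distinct surviving trees). Hence the total time is $O(D^2) + O(kD) = O(D\max(D,k))$.

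For the memory bound I would observe that the collision/subsumption logic introduces no new \emph{persistent} per-robot variables beyond those counted in the rooted-tree analysis: the 4-tuples $\langle root, depth, bordernode, borderport\rangle$ and the $Collide$/SUBSUME records are transient quantities that are broadcast and consumed, not stored at each robot across rounds. Thus the per-robot space remains $O(\max(D, \Delta\log k))$ exactly as in Theorem~\ref{th:bfsrootedtreetime}. The global communication model is required only for issuing and receiving the $Collide$ and SUBSUME broadcasts during Collision processing; all remaining movement and bookkeeping runs as in the local-model rooted case.

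The main obstacle I anticipate is the correctness argument under subsumption, specifically the bookkeeping when a collecting robot $x$ arrives at a level-$(i+1)$ node $v$ that already holds a settled robot $r''$ of another tree of level $\le i$. The modified Line~35 forces $x$ to retrace to $u$, travel up to its own root together with the relocated robots of subsumed trees, and reset $child\_id_u[out]$; I would need to verify that this retraction never strands a robot nor double-counts a node, and that repeated such events cannot make a surviving tree oscillate without net progress. Bounding the number of subsumptions by $k-1$ is what ultimately guarantees termination and caps the overhead at $O(kD)$, so the crux is to show that every round of collision handling either settles a robot, merges two trees (decrementing the live-tree count), or advances a level, leaving no unproductive loops.
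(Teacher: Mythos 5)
Your proposal is correct and follows essentially the same route as the paper's proof: inherit the $O(D^2)$ baseline from the rooted-tree case, charge each of the at most $k-1$ serial subsumptions $O(d_j)=O(D)$ rounds for the up/down/across robot relocations plus one re-execution of the downward population step (matching the paper's $3d_j + d_j$ accounting), and observe that no new persistent per-robot variables are introduced. Your closing discussion of the Line~35 retraction bookkeeping is a reasonable flag of a point the paper itself leaves implicit, but it does not change the argument.
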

\onlyLong{
\begin{proof}
Each concurrently-initiated BFS tree grows until it collides with another (or runs out of robots). When two (or more) trees collide and are assigned to the same partition of $G_C$, one tree subsumes the robots of the other(s) and by continuing the same logic for the same value of level $i$ (step 3 of Algorithm~\ref{algo:graphcollision}), the BFS-tree property is preserved until termination (exhausting all the unsettled robots at the root of the tree). On termination, one robot is settled at each distinct node of the BFS tree(s). Hence \dis\ is achieved.

Building on the proofs of Theorems~\ref{th:bfsrootedtime},~\ref{th:bfsmultiroottime}, and ~\ref{th:bfsrootedtreetime}, in addition to $\sum_{i=1}^D i$ $=O(D^2)$ rounds, we have to account for steps of the robot movements when a tree subsumes others. Let there be $x$ such serial subsumptions by the tree in question. Let the depth (level) of the tree for the $j$th subsumption be $d_j$. Then the following additional time cost is incurred:  $\sum_{j=1}^x 3d_j$ (for robot movements in step 2 of Algorithm~\ref{algo:graphcollision}) $+ d_j$ (for robot movements in line 32 of {\tt PopulateNextLevel}). $d_j<D$ and $x<k$. Thus the additional time $\leq 4Dk$. The time complexity follows.
The logic introduced in the multi-rooted algorithm does not add variables that increase the bit complexity. 
The global communication model needs to be used for issuing and processing the broadcasts. The theorem follows.
%
\end{proof}
}

{\bf Proof of Theorem~\ref{theorem:2}:} Follows from Theorems~\ref{th:bfsrootedtreetime} and \ref{th:bfsmultiroottreetime}.

\onlyShort{\vspace{-2mm}}
\section{Concluding Remarks}
\label{section:conclusion}

We have presented three results for solving {\dis} of $k\leq n$ robots on $n$-node graphs. The first two results are for arbitrary graphs and the third result is for arbitrary trees.
The first result for arbitrary graphs is based on a DFS traversal and improves by $O(\log k)$ factor the best previously known algorithm in the local communication model. 
The second algorithm for arbitrary graphs is based on a BFS traversal and improves significantly on the $O(\Delta^D)$ time of the best previously known algorithm in the local communication model. 
The algorithm for arbitrary trees is also based on a BFS traversal and improves on the $O(n)$ time of best previously known algorithm in the local communication model.


For future work, it will be interesting to solve {\dis} on arbitrary graphs using a DFS-based algorithm with time $O(k)$ or improve the existing time lower bound of $\Omega(k)$ to $\Omega(\min(m,k\Delta))$. 
For BFS-based algorithms, it will be interesting to improve $\max(D,k)$ factor to $O(D)$ for both arbitrary graphs and trees.   The third interesting direction will be to consider faulty robots; our algorithms as well as previous algorithms \cite{Augustine:2018,Kshemkalyani,Kshemkalyani2019} assume fault-free robots. The fourth interesting direction will be to solve {\dis} in dynamic graphs; so far only static graph cases are studied. The fifth interesting direction will be to extend our algorithms to solve {\dis} in semi-synchronous and asynchronous settings.  

\balance
\bibliographystyle{plain}
\bibliography{references}

\begin{thebibliography}{10}

\bibitem{Augustine:2018}
John Augustine and William~K. Moses, Jr.
\newblock Dispersion of mobile robots: {A} study of memory-time trade-offs.
\newblock {\em CoRR}, abs/1707.05629, [v4] 2018 (a preliminary version appeared
  in ICDCN'18).

\bibitem{Bampas:2009}
Evangelos Bampas, Leszek G{a}sieniec, Nicolas Hanusse, David Ilcinkas, Ralf
  Klasing, and Adrian Kosowski.
\newblock Euler tour lock-in problem in the rotor-router model: I choose
  pointers and you choose port numbers.
\newblock In {\em DISC}, pages 423--435, 2009.

\bibitem{Barriere2009}
L.~Barriere, P.~Flocchini, E.~Mesa-Barrameda, and N.~Santoro.
\newblock Uniform scattering of autonomous mobile robots in a grid.
\newblock In {\em IPDPS}, pages 1--8, 2009.

\bibitem{Cohen:2008}
Reuven Cohen, Pierre Fraigniaud, David Ilcinkas, Amos Korman, and David Peleg.
\newblock Label-guided graph exploration by a finite automaton.
\newblock {\em ACM Trans. Algorithms}, 4(4):42:1--42:18, August 2008.

\bibitem{Cormen:2009}
Thomas~H. Cormen, Charles~E. Leiserson, Ronald~L. Rivest, and Clifford Stein.
\newblock {\em Introduction to Algorithms, Third Edition}.
\newblock The MIT Press, 3rd edition, 2009.

\bibitem{Cybenko:1989}
G.~Cybenko.
\newblock Dynamic load balancing for distributed memory multiprocessors.
\newblock {\em J. Parallel Distrib. Comput.}, 7(2):279--301, October 1989.

\bibitem{Das18}
Shantanu Das, Dariusz Dereniowski, and Christina Karousatou.
\newblock Collaborative exploration of trees by energy-constrained mobile
  robots.
\newblock {\em Theory Comput. Syst.}, 62(5):1223--1240, 2018.

\bibitem{Das16}
Shantanu Das, Paola Flocchini, Giuseppe Prencipe, Nicola Santoro, and Masafumi
  Yamashita.
\newblock Autonomous mobile robots with lights.
\newblock {\em Theor. Comput. Sci.}, 609:171--184, 2016.

\bibitem{Dereniowski:2015}
Dariusz Dereniowski, Yann Disser, Adrian Kosowski, Dominik Pajak, and
  Przemyslaw Uzna\'{n}ski.
\newblock Fast collaborative graph exploration.
\newblock {\em Inf. Comput.}, 243(C):37--49, August 2015.

\bibitem{DisserMNSS16}
Yann Disser, Frank Mousset, Andreas Noever, Nemanja Skoric, and Angelika
  Steger.
\newblock A general lower bound for collaborative tree exploration.
\newblock {\em CoRR}, abs/1610.01753, 2016.

\bibitem{ElorB11}
Yotam Elor and Alfred~M. Bruckstein.
\newblock Uniform multi-agent deployment on a ring.
\newblock {\em Theor. Comput. Sci.}, 412(8-10):783--795, 2011.

\bibitem{Flocchini2012}
Paola Flocchini, Giuseppe Prencipe, and Nicola Santoro.
\newblock {\em Distributed Computing by Oblivious Mobile Robots}.
\newblock Synthesis Lectures on Distributed Computing Theory. Morgan {\&}
  Claypool Publishers, 2012.

\bibitem{flocchini2019}
Paola Flocchini, Giuseppe Prencipe, and Nicola Santoro.
\newblock {\em Distributed Computing by Mobile Entities}, volume~1 of {\em
  Theoretical Computer Science and General Issues}.
\newblock Springer International Publishing, 2019.

\bibitem{FraigniaudGKP06}
Pierre Fraigniaud, Leszek Gasieniec, Dariusz~R. Kowalski, and Andrzej Pelc.
\newblock Collective tree exploration.
\newblock {\em Networks}, 48(3):166--177, 2006.

\bibitem{Fraigniaud:2005}
Pierre Fraigniaud, David Ilcinkas, Guy Peer, Andrzej Pelc, and David Peleg.
\newblock Graph exploration by a finite automaton.
\newblock {\em Theor. Comput. Sci.}, 345(2-3):331--344, November 2005.

\bibitem{Hsiang:2003}
Tien-Ruey Hsiang, Esther~M. Arkin, Michael~A. Bender, Sandor Fekete, and Joseph
  S.~B. Mitchell.
\newblock Online dispersion algorithms for swarms of robots.
\newblock In {\em SoCG}, pages 382--383, 2003.

\bibitem{HsiangABFM02}
Tien{-}Ruey Hsiang, Esther~M. Arkin, Michael~A. Bender, S{\'{a}}ndor~P. Fekete,
  and Joseph S.~B. Mitchell.
\newblock Algorithms for rapidly dispersing robot swarms in unknown
  environments.
\newblock In {\em WAFR}, pages 77--94, 2002.

\bibitem{Kshemkalyani}
Ajay~D. Kshemkalyani and Faizan Ali.
\newblock Efficient dispersion of mobile robots on graphs.
\newblock In {\em ICDCN}, pages 218--227, 2019.

\bibitem{Kshemkalyani2019}
Ajay~D. Kshemkalyani, Anisur~Rahaman Molla, and Gokarna Sharma.
\newblock Improved dispersion of mobile robots on arbitrary graphs.
\newblock {\em CoRR}, abs/1812.05352, 2018.

\bibitem{MencPU17}
Artur Menc, Dominik Pajak, and Przemyslaw Uznanski.
\newblock Time and space optimality of rotor-router graph exploration.
\newblock {\em Inf. Process. Lett.}, 127:17--20, 2017.

\bibitem{tamc19}
Anisur~Rahaman Molla and William K.~Moses Jr.
\newblock Dispersion of mobile robots: The power of randomness.
\newblock In {\em TAMC}, pages 481--500, 2019.

\bibitem{Ortolf:2012}
Christian Ortolf and Christian Schindelhauer.
\newblock Online multi-robot exploration of grid graphs with rectangular
  obstacles.
\newblock In {\em SPAA}, pages 27--36, 2012.

\bibitem{Poudel18}
Pavan Poudel and Gokarna Sharma.
\newblock Time-optimal uniform scattering in a grid.
\newblock In {\em ICDCN}, pages 228--237, 2019.

\bibitem{Shibata:2016}
Masahiro Shibata, Toshiya Mega, Fukuhito Ooshita, Hirotsugu Kakugawa, and
  Toshimitsu Masuzawa.
\newblock Uniform deployment of mobile agents in asynchronous rings.
\newblock In {\em PODC}, pages 415--424, 2016.

\bibitem{Subramanian:1994}
Raghu Subramanian and Isaac~D. Scherson.
\newblock An analysis of diffusive load-balancing.
\newblock In {\em SPAA}, pages 220--225, 1994.

\end{thebibliography}


\end{document}